\documentclass[lettersize,journal]{IEEEtran}
\usepackage{amsmath,amsfonts}
\usepackage{amssymb}
\usepackage{algorithmic}
\usepackage{algorithm}
\usepackage{array}
\usepackage{subcaption}
\usepackage{amsthm}
\usepackage{textcomp}
\usepackage{stfloats}
\usepackage{url}
\usepackage{verbatim}
\usepackage{graphicx}
\usepackage{flushend}
\usepackage{cite}
\usepackage{booktabs}
\allowdisplaybreaks
\newtheorem{remark}{\textbf{Remark}}
\newtheorem{observation}{\textbf{Observation}}

\begin{document}

\title{
Interference Propagation Analysis for Large-Scale\\ Multi-RIS-Empowered Wireless Communications:\\ An Epidemiological Perspective
}

\author{Kaining~Wang,~\IEEEmembership{Student Member,~IEEE,} Xueyao Zhang,~\IEEEmembership{Student Member,~IEEE,} Bo~Yang,~\IEEEmembership{Senior Member,~IEEE,} Xuelin Cao,~\IEEEmembership{Senior Member,~IEEE,} Qiang Cheng,~\IEEEmembership{Senior Member,~IEEE,} Zhiwen Yu,~\IEEEmembership{Senior Member,~IEEE,} Bin Guo,~\IEEEmembership{Senior Member,~IEEE,} George C. Alexandropoulos,~\IEEEmembership{Senior Member,~IEEE,} Kai-Kit Wong,~\IEEEmembership{Fellow,~IEEE}, Chan-Byoung Chae,~\IEEEmembership{Fellow,~IEEE}, and M\'erouane Debbah,~\IEEEmembership{Fellow,~IEEE} 
\thanks{
K. Wang, X. Zhang, B. Yang, and B. Guo are with the School of Computer Science, Northwestern Polytechnical University, Xi'an, Shaanxi, 710129, China (email: wangkaining, 2024263006@mail.nwpu.edu.cn, yang$\_$bo, guob@nwpu.edu.cn). 

X. Cao is with the School of Cyber Engineering, Xidian University, Xi'an, Shaanxi, 710071, China (email: caoxuelin@xidian.edu.cn). 

Q. Cheng is with the State Key Laboratory of Millimeter Waves,
Southeast University, Nanjing 210096, China (email: qiangcheng@seu.edu.cn).

Z. Yu is with the School of Computer Science, Northwestern Polytechnical University, Xi'an, Shaanxi, 710129, China, and Harbin Engineering University, Harbin, Heilongjiang, 150001, China (email: zhiwenyu@nwpu.edu.cn).


G. C. Alexandropoulos is with the Department of Informatics and Telecommunications, National and Kapodistrian University of Athens, 16122 Athens, Greece (email: alexandg@di.uoa.gr). 

K.-K. Wong is with the Department of Electronics
and Electrical Engineering, University College London, Torrington Place, WC1E7JE, United Kingdom and also affiliated with Yonsei Frontier Lab, Yonsei University, Seoul, 03722 Korea (e-mail: kai-kit.wong@ucl.ac.uk).

C.-B. Chae is with the School of Integrated Technology, Yonsei University, Seoul, 03722 South Korea (e-mail: cbchae@yonsei.ac.kr).


M. Debbah is with  KU 6G Research Center, Department of Computer and Information Engineering, Khalifa University, Abu Dhabi 127788, UAE and also with CentraleSupelec, University Paris-Saclay, 91192 Gif-sur-Yvette, France (email: merouane.debbah@ku.ac.ae)


}
}




\maketitle

\begin{abstract}
Reconfigurable intelligent surfaces (RISs) have gained significant attention in recent years due to their ability to control the reflection of radio-frequency signals and reshape the wireless propagation environment. Unlike traditional studies that primarily focus on the advantages of RISs, this paper examines the negative impacts of RISs by investigating interference propagation caused by user mobility in downlink wireless systems. We employ a stochastic geometric model to simulate the locations of base stations and RISs using the Mat\'{e}rn hard core point process, while user locations are modeled with the homogeneous Poisson point process. We derive novel closed-form expressions for the power distributions of the received signal at the users and the interfering signal. Additionally, we present a novel expression for coverage probability and introduce the concept of interference propagation intensity. To characterize the dynamics of interference caused by user mobility, we adopt an epidemiological approach using the susceptible-infected-susceptible model. 
Finally, crucial factors influencing the propagation of interference are analyzed. Numerical results validate our theoretical analysis and provide suggestions for managing interference propagation in large-scale multi-RIS wireless communication networks.
\end{abstract}

\begin{IEEEkeywords}
Reconfigurable intelligent surface (RIS), interference propagation, stochastic geometry, epidemic dynamics.
\end{IEEEkeywords}

\section{Introduction}

\IEEEPARstart{I}{n} recent years, reconfigurable intelligent surfaces (RISs) have attracted significant attention for their ability to manipulate radio-frequency (RF) signal propagation and reshape wireless communication environments \cite{RIS,hardware}. RIS is a planar metasurface with a controller and multiple low-cost, passive meta-atoms. These meta-atoms are composed of metal patches arranged in a dense configuration, allowing them to be independently controlled, or in groups, by an RIS controller. This independent control enables the intelligent adjustment of the phase, amplitude, or polarization characteristics of the reflected signals, thereby synergistically altering the wireless propagation environment. This technology has the potential to significantly enhance the efficiency and reliability of communication systems. It is characterized by high energy efficiency, low hardware costs, and ease of deployment, making it a game-changer for next-generation wireless networks \cite{RIS_challenges}. In this context, RISs are expected to be widely applied and deployed in future wireless communication networks to enhance coverage and reduce energy consumption.


\subsection{Related Work}
A significant amount of research has focused on analyzing the performance of RIS-assisted communication systems, predominantly examining link-level performance gains. Specifically, RIS and multi-access edge computing (MEC) technologies have been integrated into the space information network (SIN) to improve data rates and reduce latency~\cite{SINMEC}. 
In~\cite{RISISACSecure}, an RIS-assisted ISAC framework is proposed to achieve robust and secure transmission under imperfect sensing estimation.
The authors in~\cite{HAPWN} investigate a novel hybrid wireless network that includes active base stations (BSs) and passive RISs, demonstrating that deploying distributed RISs can effectively improve network throughput. The authors in~\cite{FGARIS} focus on the characteristics of millimeter-wave networks by incorporating directional beamforming and distinct path loss models for line-of-sight (LoS) and non-line-of-sight (NLoS) propagation. Furthermore, the authors in~\cite{RISAMCN} derive an expression for the ergodic rate of paired non-orthogonal multiple access (NOMA) users. 
The authors in ~\cite{RISMIMOAnalysis} obtain the cumulative distribution functions of large-scale parameters (LSPs) to explore the relationship between LSPs and heights.
However, these works overlook the significant impact that different association rules can have on system performance. To address this gap, the authors in~\cite{SGALIS} utilize stochastic geometry (SG) to analyze the performance of a large intelligent surface (LIS)-based millimeter-wave network, accounting for user association constraints in system deployment. In~\cite{PARLW}, closed-form expressions for the coverage probability under nearest-neighbor and fixed-association strategies in interference-limited scenarios are derived, demonstrating the effectiveness of RISs in enhancing system performance. Since the above studies primarily focus on flat obstacles and RISs, more realistic scenarios are considered in~\cite{CAIMW} and ~\cite{CARISMC}, where 2D buildings and 3D beamforming are involved to analyze network coverage.


 
Although the potential benefits of RIS have been well-documented, such as improved signal quality and coverage \cite{RIS-interference}, unintended consequences are often overlooked. {Compared to traditional communication systems, in fact, RIS-empowered communication systems may introduce additional interference due to the passive reflective nature of RIS} \cite{yb-tvt, 11212816}. Specifically, RIS typically receives control signals from BS and adjusts the reflection coefficients of each reflection element to achieve beamforming or scattering \cite{10600711}. However, for impinging signals from a non-associated BS \cite{10670007}, the RIS cannot achieve the desired signal reflection, leading to additional interference for the intended users. More critically, when we consider the user's movement \cite{9693982}, a mobile user may even turn its intended signal into interference to other adjacent users when they are in proximity, due to the imperfect beamforming of the RIS. \textit{{In this context, the interfering signals caused by an RIS can be propagated and ultimately degrade network performance, especially in large-scale multi-RIS wireless environments.}} 

\subsection{Motivation and Contributions}
This paper investigates the potential negative impact of deploying multiple RISs for downlink communications and studies the challenges associated with multi-RIS-induced interference propagation. Capitalizing on the powerful SG tool~\cite{SG2}, we represent the locations of the BSs and RISs using the Mat\'{e}rn Hardcore Point Process (MHCPP) \cite{MHCPP}, while user locations are modeled by the Homogeneous Poisson Point Process (HPPP) \cite{PPP}. 
We derive closed-form expressions for the power distribution of the desired and interfering signals within the RIS-assisted wireless system. To illustrate the interference propagation phenomenon, we apply an epidemic model to depict the number of infected nodes in the system \cite{epidemic}. The presented numerical results validate our theoretical model, highlighting the potential risks associated with RIS deployments and assessing key factors that determine RIS-induced interference. 

The main contributions of this paper are outlined as follows:
\begin{itemize}
    \item We present a framework for analyzing system-level interference in multi-RIS wireless communication systems. By employing the gamma approximation, we characterize the distributions of signal power and interference before and after user movement. Based on this analysis, the outage probability is derived.
    \item We study the phenomenon of interference propagation through an infectious disease model. By applying the SIS model, we classify mobile users as either susceptible or infected. Using the calculated outage probabilities, we derive closed-form expressions for the infection and recovery rates, allowing us to analyze the intensity of interference propagation within the system.
    \item By utilizing closed-form expressions, we analyze RIS-assisted interference propagation in large-scale wireless networks. We conduct Monte Carlo simulations to examine the effects of key variables on interference propagation, including BS and user densities, frequency, and the number of RIS elements. To minimize system interference and decrease the risk of interference propagation, it is crucial to deploy RIS strategically, especially in high-density scenarios.
\end{itemize}
\begin{figure}[t]
    \centering
    \includegraphics[width=0.98\linewidth]{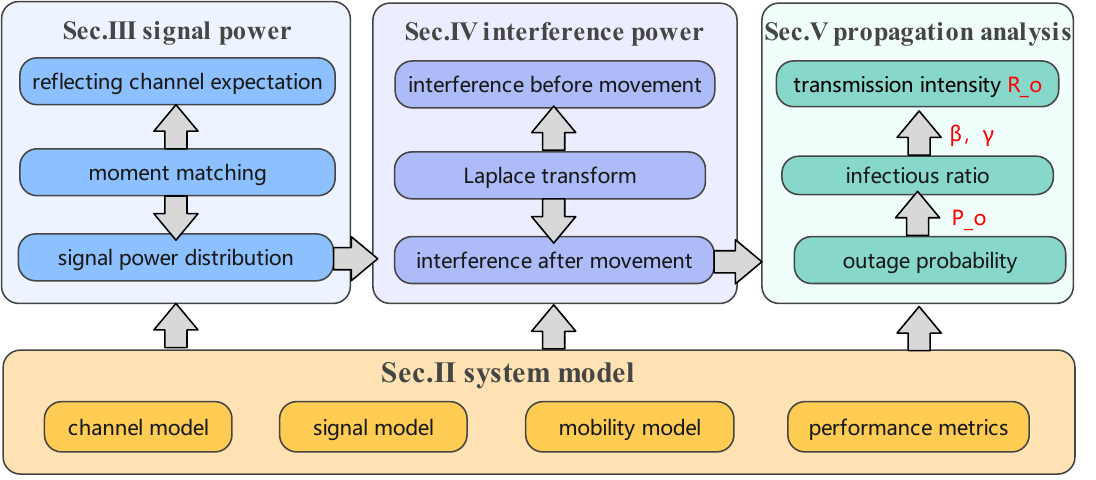}
    \caption{\small The structure of the paper.}
    \label{fig:structure}
\end{figure}

The remainder of this paper is structured as illustrated in Fig.~\ref{fig:structure}. In Sec.~\ref{system_model}, we introduce the SG-based system model. The signal power analysis and interference analysis are given in Sec.~\ref{power_analysis} and Sec.~\ref{inter_analysis}, respectively. Sec.~\ref{inter_pro_analysis} details the analysis of interference propagation, while Sec.~\ref{results} presents and discusses the numerical results. Finally, the paper is concluded in Sec.~\ref{conculsion}.

\begin{table*}[]
\renewcommand\arraystretch{1.25} 
\centering
\captionsetup{font={small}}
\caption{The notations of the paper.}
\label{tab:t1} 
\resizebox{0.9\textwidth}{!}{  
\begin{tabular}{|c|l|}    
\hline
\small
\textbf{Notation}&\textbf{Definition}\\
\hline\hline
\textit{P}&Transmit power of the BSs\\
\hline
$\Theta_B$,$\Theta_R$,$\Theta_U$&Set of the BSs/RISs/UEs\\
\hline
$\lambda_B,\lambda_R,\lambda_U$&Density of the BSs/RISs/UEs\\
\hline
$PL_{ij},PL_{ik},PL_{jk},PL_{ijk}$ &Path loss between the BS and RIS, the BS and UE, the RIS and UE, the BS and RIS and UE\\
\hline
$\alpha$&Path loss exponent of the channel\\
\hline
$C$&Path loss of the channel at the reference distance of 1 meter\\
\hline
$N$&Number of elements for each RIS\\
\hline
$h_{ik}$&Small-scale fading of the direct link\\
\hline
$h_{ij},h_{jk}$& Nakagami-$m$ channel fading coefficient of the reflected link\\
\hline
$\phi_k$&Phase shift matrix of the RIS\\
\hline
$d_{ij},d_{ik},d_{jk},d_{ijk}$&Distance between the BS and RIS, the BS and UE, the RIS and UE, the BS and RIS and UE\\
\hline
$T$ & Threshold for the signal-to-interference-plus-noise ratio (SINR)\\
\hline
$\beta,\mu$&Infection rate and recovery rate of the SIS epidemic model\\
\hline
\end{tabular}
}
\end{table*}

\section{System Model}
\label{system_model}
\begin{figure}
    \centering
    \includegraphics[width=\linewidth]{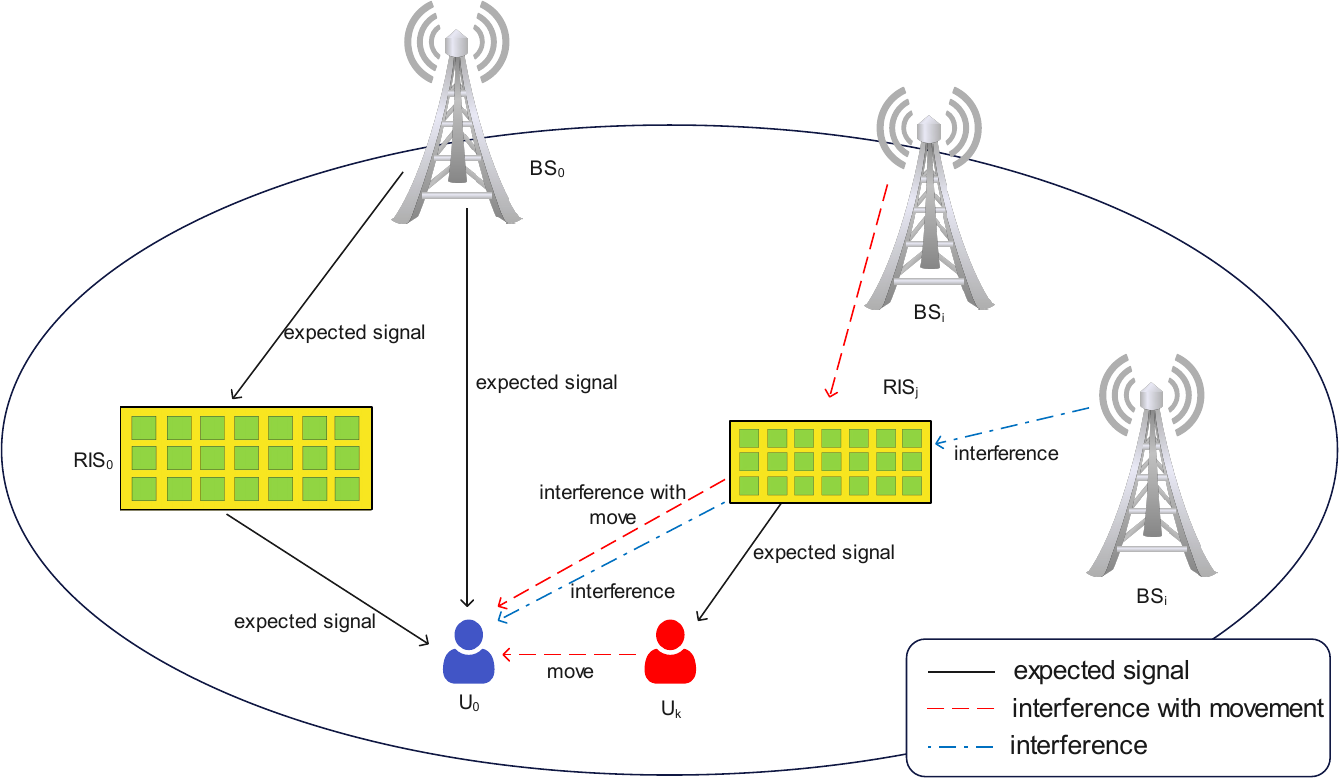}
    \caption{\small The considered multi-RIS-assisted multi-cell downlink wireless system. A typical user $U_0$ serves as the observation target in our interference propagation analysis. Each BS owns a dedicated RIS~\cite{10670007}, enabling both direct and RIS-reflected transmission paths. Solid lines represent the desired signal paths received by $U_0$,  including both direct BS-UE links and RIS-reflected BS-RIS-UE links. Blue dashed lines denote interference originating from interfering BSs and their associated RISs. Red dashed lines represent additional interference caused by the movement of other users, whose reflected signals via nearby RISs unintentionally contaminate $U_0$'s reception, thereby propagating interference dynamically.}
    \label{fig:scene}
\end{figure}
We consider a multi-RIS-assisted multi-cell cellular network comprising $B$ single-antenna BSs, $R$ RISs, and $U$ single-antenna user equipment (UE) nodes. We denote the sets of BSs, RISs, and UEs as $\Theta_{B}=\{1,2,...,B\}$, $\Theta_{R}=\{1,2,...,R\}$, and $\Theta_{U}=\{1,2,...,U\}$, respectively. Let $i \in \Theta_{B}$, $j \in \Theta_{R}$, and $k \in \Theta_{U}$ denote the indices of a BS, a RIS, and a UE, respectively. In this paper, we analyze the downlink communication performance from the BSs to the UEs. All RISs comprise reflective elements arranged as a rectangular array. As demonstrated in~\cite{11078147}, a two-bit distinct phase states per RIS element can yield reflection performance very close to the theoretical upper bound. Hence, we assume, in this paper, this resolution for all metematerials of all RISs. We establish a right-angle coordinate system with the ground as the \textit{xOy} plane, where the unit vector components of the \textit{x}-axis and \textit{y}-axis are $\mathbf{e_x}$ and $\mathbf{e_y}$, respectively. For the locations of the BSs and RISs, we use the hard-core cluster process. The BS position is the parent process, and the RIS position is the child process. The BS acts as the cluster center, which uses the MHCPP to ensure that the distance between the BSs is greater than $r_B$. This process can be regarded as excluding all nodes that violate the hard-core condition based on the HPPP and retaining nodes whose distance is greater than or equal to $r_B$. We consider deploying distributed RISs to assist BS-UE communication in the network. It is assumed that all RISs have the same height $H_R$ in meters. It is noted that if the RISs are too far away from the BS, they will have difficulty receiving control signals from the BS. Therefore, the RISs obey the HPPP distribution and they are distributed in a circle with the BS as the cluster center and radius $r_R$. The locations of the UEs constitute an independent HPPP with density $\lambda$. We consider a typical user, $U_0$, located at the origin, which serves as the observation target in the interference analysis. According to Slivnyak's theorem \cite{SG3}, the performance of $U_0$ represents the average performance of the UE. All our system model parameters are appended in Table~\ref{tab:t1}.

In our system model, we assume that each UE is associated with only one BS. This connection can be established through direct or indirect links. A direct link connects a BS and a UE, while an indirect link connects them via a RIS reflection. We assume that each UE selects the nearest BS to establish a connection, and similarly, each RIS is associated with the nearest BS. We also assume that reflected links can form only through an RIS, without considering links through multiple RISs.

\subsection{Channel Model}
We denote the transmitted power as $P$ and the regularized small-scale fading channel as $g$. The direct link is assumed to follow a Rayleigh fading channel, suitable for scattering-rich urban environments, and no stable LoS path exists between the BS and its assigned UE. We use the product-distance model \cite{productPL} as our path-loss (PL) model. To this end, the PL of the direct link is expressed as follows:
\begin{equation}
    PL_{ik}=Cd_{ik}^{-\alpha },
\end{equation}
where $C$ denotes the free-space PL coefficient, incorporating the carrier wavelength $\lambda$, the transmit antenna gain $G_t$, the receive antenna gain $G_r$, and the free-space path index. In particular, $C=\left(\frac{\lambda \sqrt{G_tG_r}}{4\pi}\right)^2$, where $d$ is the distance between the UE and BS, and $\alpha$ is the PL exponent. It is assumed that the horizontal distances between channels are much larger than the vertical distances, and that the heights of the BSs, RISs, and UEs are negligible.

Similarly, the PL of the BS-RIS-UE (reflected link) can be expressed as follows:
\begin{equation}
    PL_{ijk}=C(d_{ij}d_{jk})^{-\alpha},
\end{equation}
where $d_{i,j}$ is the distance between the BS and RIS, $d_{jk}$ is the distance between the RIS and UE, and $C$ is the per unit PL of reflected link. 


\subsection{Signal Model}
Assuming that each BS is associated and controls a distinct RIS, the received signal at a UE $k$ can be expressed as follows:
\begin{equation}
    y_k=\left(\sqrt{PL_{ik}}h_{ik}+\sqrt{PL_{ijk}}h_{ij}\Phi_kh_{jk}\right)\sqrt{P}s_k+I_k+n,
\label{con:yk}
\end{equation}
where $\Phi_k = diag(\boldsymbol{\theta_k})$ denotes the diagonal reflection matrix of the RIS assisting the downlink communication towards the $k$-th UE; $s_k$ represents this UE's unit power transmit symbol from BS; $h_{ij}$ and $h_{jk}$ represent the normalized small-scale fading of the reflected channels; $n\sim CN(0,\epsilon^2I_N)$ is the additive white Gaussian noise (AWGN) at the BS with zero mean and covariance matrix $\epsilon^2I_N$, where $\epsilon^2$ is the equivalent noise power and $I_N$ is the identity matrix of size $N$. To enable coherent signal combining at the UE side, the RIS adjusts the phase shifts of its reflecting elements under the control of the associated BS~\cite{10600711}. Specifically, the phase of the reflected path is aligned such that it adds constructively with the direct path at the receiver, thereby maximizing the total received signal power. The optimal RIS phase shift for UE \textit{k} is designed to compensate for the phase differences along the BS-RIS and RIS-UE links, i.e.:
\begin{equation}
    \boldsymbol{\theta_k}=-arg(h_{ij,n})-arg(h_{jk,n})+arg(h_{ik,n}), n=1,...,N.
\end{equation}
In \eqref{con:yk}, the term $I_k$ represents the interference from all other BSs and can be expressed as follows:
\begin{equation}
    I_k=\sum_{i\in B \backslash \{i(k)\}} (\sqrt{PL_{ik}}h_{ik}+\sqrt{PL_{ijk}}h_{ij}\Phi_kh_{jk})\sqrt{P}s_k.
\end{equation}
The interference received by UE $k$ comes from all BSs except for its serving BS $i(k)$. $i(k)$ denotes the index of the serving BS for user $k$, which is selected based on the nearest BS association strategy:
\begin{equation}
i(k) = \arg\min_{i \in \Theta_B} d_{ik}.
\end{equation}

Thus, the SINR at the \textit{k}-th UE is given by:

\begin{equation}
	\gamma_k\!=\!\frac{P\left|\sqrt{PL_{ik}}h_{ik}\!+\!\sqrt{PL_{ijk}}h_{ij}\Phi_kh_{jk}\right|^2}{\sum_{i\in B \backslash \{i(k)\}}P\left|\sqrt{PL_{ik}}h_{ik}\!+\!\sqrt{PL_{ijk}}h_{ij}\Phi_kh_{jk}\right|^2\!+\!\sigma^2},
\end{equation}
where the numerator includes the power of the intended signal at the \textit{k}-th UE, and the denominator includes the interference from all other BSs in the system.

\subsection{Mobility Model}
In RIS-empowered downlink communication scenarios~\cite{RIS_challenges}, the BS is fixed, and the RIS is statically deployed near it. However, the UE's location is random and unpredictable. We adopt the random walk mobility model \cite{mobility} to describe the movement of the UEs in the system. According to this model, a UE moves from its current location ($L$) to a new location ($L'$) by randomly selecting a direction (from [0, 2$\pi$]) and a distance (from [0, 10] meters). Given that the RIS has beamforming capabilities and can be optimized to track mobile UEs, a UE experiencing interference will move, thereby propagating the interference. This implies that if this UE moves close to another UE, the latter's interference level will be impacted.

\subsection{Epidemic Model}
The epidemic model \cite{epidemicModel} examines the emergence of infectious diseases, their spread within populations, and strategies for preventing, controlling, and mitigating their spread. These models have also been explored to describe the spread of computer viruses across networks, such as the Internet \cite{EpidemicInternet} and emails \cite{EpidemicEmail}. In large-scale networks, the dissemination of interference due to UE mobility exhibits behavior analogous to the spread of infectious diseases. Therefore, in this paper, we use an epidemic model to assess the impact of interference propagation due to UE mobility in an RIS-assisted communication system. Common infectious models include SI, SIS, SIR, and SEIR models \cite{SI, SIS2, SIR, SEIR}, which respectively model different types of diseases. We specifically utilize the SIS model \cite{sis} to describe the dynamics of interference signal propagation over a pre-defined range. In the SIS model, UEs are represented as distinct nodes, each existing in one of two states: susceptible (S), meaning the UE is vulnerable to interference; or infected (I), indicating that the UE has been adversely affected by interference. In this context, ``infected'' refers to the SINR of UE $k$ being below a certain threshold, i.e., $\gamma_k<T$. We denote the number of susceptible individuals by $S(t)$, and the number of infected individuals by $X(t)$. Assuming the total number of nodes remains constant, we have $S(t) + X(t) = N_{U}$, where $N_U$ denotes the total number of UEs. In the SIS model, susceptible UEs become infected through contact with infected UEs and can be reinfected after recovery, as there is no immunity. The dynamic process of infection propagation within a given area can be modeled as follows:
\begin{equation}
\begin{cases}
 \frac{dS(t)}{dt}=-\beta X(t)S(t)+\mu X(t) ,
 \\\frac{dX(t)}{dt}=\beta X(t)S(t)-\mu X(t),
\end{cases}
\end{equation}
where $\beta$ represents the probability of infection through close proximity in a unit of time. To this end, $\beta s(t)$ denotes the average number of newly infected UEs generated by an infected UE through contact transmission in a unit of time. Finally, $\beta i(t)s(t)$ denotes the number of newly infected UE in a unit of time, and $\mu$ is denoted as the recovery rate.

\subsection{Performance Metrics}
In this paper, we analyze the coverage probability and interference propagation intensity for the considered RIS-empowered downlink communication system.

1) \textbf{Outage Probability:} The outage probability is defined as the probability that the received SINR is below a threshold:
\begin{equation}\label{PoD}
    P_o(\gamma_k)={\rm Pr}\left[\gamma_k<T\right].
\end{equation}
Given the outage probability before UE movement as $P_o$, and after movement as $P_o'$, the infection rate $\beta$ can be modeled as:
\begin{equation}
    \beta =\left(1-P_o\right)P_o'.
\end{equation}
Consequently, te recovery rate $\mu$ is given by:
\begin{equation}
    \mu =P_o\left(1-P_o'\right).
\end{equation}

2) \textbf{Interference Propagation Intensity:} The interference propagation intensity, $R_0$, is defined as the ratio of the infection rate over the recovery rate. In our SIS model, if $R_0 > 1$ holds, each infected UE will, on average, infect more than one UE, so the disturbance will continue to spread in the system. Otherwise, if $R_0 \leq 1$, then each UE's infection will not be sufficient to maintain the spread of the disturbance, so the system disturbance will gradually subside.

\section{Signal Power Analysis}
\label{power_analysis}
In this section, we analyze the average signal power distribution, which has a more complex form than in the traditional case without any RIS. We assume that each UE receives both the direct-link signal and the RIS-assisted-link signal. Then, the RIS-assisted received signal power is given by:
\begin{equation}\label{eq:S_0}
S_0=\left(\sqrt{PL_{ik}}|g_0|e^{\phi_{ik}}+\sqrt{PL_{ijk}}S_r\right)^2,
\end{equation}
where $S_r$ denotes the gain of the BS-RIS and RIS-UE channels, which can be expressed as follows:
\begin{equation}
    \begin{split}
        S_r&=\sum_{n=1}^{N}|h_{ij}|\phi_j|h_{jk}|=\sum_{n=1}^{N}|h_{ij}||h_{jk}|e^{\phi_{ij}+\phi_{jk}+\phi_k}.
    \end{split}
\end{equation}

To maximize the received signal power, the RIS phase shifts are designed such that the reflected path adds coherently with the direct path. This requires aligning the phases of the two-channel components. The phase shifts can be designed as\footnote{Note that, if the goal was to maximize the desired signal power and simultaneously minimize the interference power, a different RIS design would be needed. In that case, estimating interference-related parameters would impose greater system overhead.} To facilitate phase alignment at the receiver, the RIS phase shift $\theta_{k,n}$ applied at the $n$-th element is designed based on the phases of the involved channels. Letting $\phi_{ij}=arg(h_{ij,n})$, $\phi_{jk}=arg(h_{jk,n})$, and $\phi_{ik}=arg(h_{ik,n})$, the optimal phase shift is: $\theta_{k,n}=-\phi_{ij}-\phi_{jk}+\phi_{ik}$. For compactness, we denote the total effective RIS phase shift as $\phi_{k}=\theta_{k,n}$, and use this in the refined expression of \eqref{eq:S_0}.
\begin{equation}
S_0=\left(\sqrt{PL_{i0}}|h_{i0}|+\sqrt{PL_{ij0}}\sum_{n=1}^{N}|h_{ij}||h_{j0}|\right)^2.
\end{equation}

Since the exact distribution of $S_r$ is difficult to compute in closed form, we deploy the moment matching method \cite{mmm} to derive the distribution of the reflected signal. Since both $h_{ij}$ and $h_{jk}$ follow the Nakagami-$m$ distribution \cite{nakagami}, the expectation of $h_{ij}$ and $h_{jk}$ can be given as follows:
\begin{equation}
    \mathbb{E}\left\{|h_{ij}|\right\}=\frac{\Gamma(m_1+1/2)}{\Gamma(m_1)}\frac{1}{\sqrt{m_1}},
\end{equation}
\begin{equation}
    \mathbb{E}\left\{|h_{jk}|\right\}=\frac{\Gamma(m_2+1/2)}{\Gamma(m_2)}\frac{1}{\sqrt{m_2}},
\end{equation}
where $m_1$ and $m_2$ denote the $m$-parameters of the BS-RIS and RIS-UE channels. Hence, the first and second order moments of $S_r$ can be computed as follows:
\begin{equation}
    \begin{split}
        \mathbb{E}\left\{S_r\right\} &= \sum_{n=1}^{N}\mathbb{E}\left\{|h_{ij,n}|\right\}\mathbb{E}\left\{|h_{jk,n}|\right\}\\
    &=N\mathbb{E}\left\{|h_{ij,n}|\right\}\mathbb{E}\left\{|h_{jk,n}|\right\}\\
    &=N\frac{\Gamma(m_1+1/2)}{\Gamma(m_1)}\frac{\Gamma(m_2+1/2)}{\Gamma(m_2)}\frac{1}{\sqrt{m_1m_2}},
    \end{split}
\end{equation}
\begin{equation}
    \begin{split}
        \mathbb{E}\left\{S_r^2\right\}&=\mathbb{E}\left\{\sum_{n=1}^{N}|h_{0,n}^2|r_{0,n}|^2\right\}\\
    &\hspace{0.4cm}+2\mathbb{E}\left\{\sum_{n=1}^{N-1}\sum_{i=n+1}^N|h_{0,n}||r_{0,n}||h_{0,i}|r_{0,i}\right\}\\
    &=N+N(N-1)\\
    &\hspace{0.4cm}\times\left(\frac{\Gamma(m_1+1/2)}{\Gamma(m_1)}\frac{\Gamma(m_2+1/2)}{\Gamma(m_2)}\right)^2\frac{1}{m_1m_2}.
    \end{split}
\end{equation}
Clearly, $S_r$ can be parameterized by the shape and scale of the Gamma random variable as:
\begin{align}
    \kappa _r=\frac{\mathbb{E}^2\{S_r\}}{{\rm Var}\{S_r\}} ,\eta_r=\frac{{\rm Var}\{S_r\}}{\mathbb{E}\{S_r\}}, 
\end{align}
where ${\rm Var}\{S_r\}=\mathbb{E}\{S_r^2\}-\mathbb{E}^2\{S_r\}$. 

Since the distribution of the Gamma random variable $S_0$ consists of the random variables $g_0$ and $S_r$, we need not only the first and second order moments of $S_r$, but also the first and second order moments of the direct link $g_0$. Assuming that the direct link is a Rayleigh fading channel, its first and second order moments are well known to be given by:
\begin{align}
    &E\left\{|g_0|\right\}=\sqrt{\frac{\pi }{4}},\,\,{\rm Var}\left\{|g_0|\right\}=1-\frac{\pi}{4}.
\end{align}
Following the solution procedure in~\cite{PARLW}, we can obtain a Gamma-based approximation for $S_0$ using the first and second order moments in (\ref{con:s0}) and (\ref{con:s02}) (top of next page).
To this end, its shape, $k_s$, and scale, $\eta_s$, parameters can be straightforwardly derived as follows:
\begin{figure*}[t]
\begin{equation}
\begin{split}
    \mathbb{E}\{S_0\}&=PL_{i0}\mathbb{E}\{|g_0|^2\} + 2\sqrt{PL_{i0}PL_{ij0}}\mathbb{E}\{|g_0|\}\mathbb{E} \{S_r\}+PL_{ij0}\mathbb{E}\{S_r^2\}\\
    &=\frac{\pi PL_{ij}}{4}+2N\sqrt{\frac{\pi PL_{ij}PL_{ij0}}{4} }\frac{\Gamma(m_1+1/2)}{\Gamma(m_1)}\frac{\Gamma(m_2+1/2)}{\Gamma(m_2)}\frac{1}{\sqrt{m_1m_2}}+N\\
    &\hspace{0.4cm}+N(N-1)\left(\frac{\Gamma(m_1+1/2)}{\Gamma(m_1)}\frac{\Gamma(m_2+1/2)}{\Gamma(m_2)}\right)^2\frac{1}{m_1m_2}
\end{split}
\label{con:s0}
\end{equation}
\end{figure*}

\begin{figure*}[ht]
\begin{equation}
\begin{split}
    \mathbb{E}\{S_0^2\}&=\mathbb{E}\left\{PL_{ik}^2g_{0}^4+PL_{ijk}^2Sr^4+4PL_{ik}PL_{ijk}g_{0}^2Sr^2+2PL_{ik}PL_{ijk}g_{0}^2Sr^2\right.\\
&\left.\hspace{0.4cm}+4\sqrt{PL_{ik}^3}PL_{ijk}g_{0}^3Sr+4\sqrt{PL_{ik}PL_{ijk}^3}g_{0}Sr^3\right\}\\
&=PL_{ik}^2\left(\Gamma(3)+4\sqrt{PL_{ijk}/PL_{ik}}\Gamma(5/2)\eta _r\frac{\Gamma(1+\kappa_r)}{\Gamma(\kappa _r)}    +6\frac{PL_{ijk}}{PL_{ik}}\Gamma(2)\eta _r^2 \frac{\Gamma(2+\kappa_r)}{\Gamma(\kappa _r)}\right.\\ &\left.\hspace{0.4cm}+4\sqrt{PL_{ijk}/PL_{ik}^3}\Gamma(3/2)\eta _r^3\frac{\Gamma(3+\kappa _r)}{\Gamma(\kappa _r)}
    +\frac{PL_{ijk}}{PL_{ik}}^2 \eta _r^4 \frac{\Gamma(4+\kappa _r)}{\Gamma(\kappa _r)}\right)
\end{split}
\label{con:s02}
\end{equation}
\hrulefill
\end{figure*}
\begin{align}
    k_s=\frac{\mathbb{E}\{S_0\}^2}{Var\{S_0\}},\,\,
    \eta_s=\frac{{\rm Var}\{S_0\}}{\mathbb{E}\{S_0\}}.
\end{align}
Hence, the probability density function and cumulative distribution function of $S_0$ can be expressed directly as:
\begin{align}
    f_{S_0}(x)&=\frac{x^{k_s-1}\exp\left(-x/\eta _s\right)}{\Gamma \left(k_s\right)\eta _s^{k_s}},\\
    F_{S_0}(x)&=\frac{1}{\Gamma(k)}\gamma\left(k_s,x/\eta\right),
\end{align}
where $\gamma (\cdot,\cdot)$ is the lower incomplete gamma function.

\section{Interference Power Analysis}
\label{inter_analysis}
In this section, we analyze the distribution of the interference power. Our model assumes that interference signals at each UE originate from all BSs except the designated serving BS. 
Thus, the interference signals are subject to misaligned phase shifts that may alter their negative impact on the intended UE~\cite{Interference1}. 

It is well known that RISs can operate in various modes~\cite{hardware}, e.g., beam steering, beamforming/focusing, and beam splitting,   
each having its unique characteristics. Specifically, beam steering directs the signal towards the target user, while beamforming converges multiple signals into a single focused beam. Beam splitting divides a single incoming signal into multiple outgoing signals in different directions. Regardless of the mode used, the phase coupling between the interference and the RIS can boost interference power. When considering UE mobility \cite{Usermobility}, if beam deflection occurs, the interference will follow the intended UE's movement, but this may not significantly alter the overall interference level. In beamforming, because the RIS passively reflects signals without distinguishing between useful and interfering signals, the latter are directed toward the target UE, resulting in interference boosting. It is important to note that our analysis assumes perfect beamforming without errors. Any phase shift errors \cite{phaseError} can exacerbate the level of interference. Additionally, as other UEs approach, their desired signals may become interfering signals for the target UE.

To address the latter situation, we need to examine the distribution of interference both before and after the UE performs a movement, comparing with and without UE mobility. Here, we focus on how interfering signals are received in a communication environment assisted by multiple RISs.

\subsection{Interference before UE Movement}
We first evaluate the interference distribution when no UE movement occurs. In this case, it holds:
\begin{equation}
        I_k=\sum_{i\in B \backslash \{i(k)\}}\left(\sqrt{PL_i}h_{ik}+\sum_{j\in R}\sqrt{PL_{ijk}}h_{ij}\Phi_jh_{jk}\right)^2.
\end{equation}
To compute the distribution of the aggregated interference $I_k$ at the target UE, we employ the Laplace transform \cite{laplace}, which is a standard and tractable tool in SG-based wireless network analysis. Specifically, the Laplace transform of $I_k$ enables us to derive closed-form expressions for key performance metrics, including outage and coverage probabilities. This is particularly useful when $I_k$ is a sum of many random variables stemming from spatially distributed interferers, making its probability density function intractable to obtain directly. To this end, the Laplace transform of the interference term $I_k$ experienced by the target UE $k$ can be expressed as:
\begin{equation}\label{eq:l1}
\begin{split}
    &\mathcal{L}_{I_k}(s)=\exp\left( -2\pi\lambda_B\int^\infty\left(1-\frac1{1+sC\nu^{-\alpha}} \right)\nu d\nu\right)\\
    &\times\exp\left(-2\pi\lambda_B\int^\infty\left(1-\mathbb{E}_{R}\left\{\prod_{j\in R}\frac1{sNC^2\nu^{-\alpha}d_{jk}^{-\alpha}+1}\right\}\right)\nu d\nu\right)\\
    &=\exp\left(\frac{ -2\pi^2 \lambda_B{\rm csc}\left(\frac{2\pi}{\alpha }\right)(sC)^{2/\alpha} }{\alpha}\right)\\
    &\times \exp\left(-2\pi\lambda_B\left(\frac{2\pi^2\lambda_Rcsc(2\pi/\alpha )sN^2C^2 }{\alpha^2 }\left(\ln_{}{d_{max}}\!-\!\ln_{}{d_{min}}\right)\right.\right.\\
    &\left.\left.+\frac{\alpha }{\alpha -1}\left(d_{max}^{(1-1/\alpha)}\!-\!d_{min}^{(1-1/\alpha)}\right)\right)\right),
\end{split}
\end{equation}
where $d_{min}$ and $d_{max}$ denote the minimum and maximum values between the UE and RIS, respectively. $\mathbb{E}_{R}$ is the expectation taken over the spatial distribution of RISs, modeled as a PPP. And $\csc(x)=1/\sin(x)$ is the cosecant function.

\begin{proof}
    See Appendix A.
\end{proof}

\subsection{Interference after UE Movement}
Unlike most existing work, when UE movement is considered, the interference caused by RIS must be modeled. This is important because the RIS may direct interfering signals from other BSs to the target UE. Additionally, we should consider interference from directly impacting BSs, as well as interference reflected from the BSs via the RIS (the one assigned to serve the target UE). Consequently, as a UE approaches the target UE, the interference signal can be expressed as follows:
\begin{equation}
    \begin{split}
        I_k'=&\sum_{i\in B \backslash \{i(k)\}}\left(\sqrt{PL_{ik}}h_{ik}\!+\!\sum_{j\in R}\sqrt{PL_{ijk}}h_{ij}\Phi_jh_{jk}\right)^2\\
    &\!+\!\sum_{k\in k_I}\left(\sqrt{PL_{ijk}}h_{ij}\Phi_jh_{jk}\right)^2,
    \end{split}
\end{equation}
where $k_I$ denotes the number of UEs causing interference and
\begin{equation}
    h_{ij}\Phi_jh_{jk} = \sum_{n=1}^{N}h_{ij}^ne^{i\phi_{jt}}h_{jk}^n.
\end{equation} 

Obviously, the expression of the disturbance is in the form of a multivariate sum, and thus its statistical properties can be obtained by using the Laplace transform to convert it into a natural exponential function. 

Therefore, the statistical characterization of the interference power $I'$ for the target UE is given by:
\begin{align}\label{eq:l2}
    &\mathcal{L}_{I_k'}(s)=\exp\left(\frac{ -2\pi^2 \lambda_B{\rm csc}\left(\frac{2\pi}{\alpha }\right)(sc)^{2/\alpha} }{\alpha }\right)\nonumber\\
    &\times \exp\left(-2\pi\lambda_B\left(\frac{2\pi^2\lambda_Rcsc(2\pi/\alpha )sN^2c^2 }{\alpha^2 }\right.\right.\nonumber\\
    &\left.\left.\times\left(\ln_{}{d_{max}}\!-\!\ln_{}{d_{min}}\right)
    +\frac{\alpha }{\alpha\!-\!1}\left(d_{max}^{(1\!-\!1/\alpha)}\!-\!d_{min}^{(1\!-\!1/\alpha)}\right)\right)\right)\nonumber\\
    &\times 
    \exp\left(\frac{ -2\pi^2 \lambda_{U_{near}}{\rm csc}(\frac{2\pi}{\alpha })(sc)^{2/\alpha} }{\alpha}\right),
\end{align}
where $\lambda_{U_{near}}=\lambda_{B}\lambda_{U}\pi r_I^2$ with $r_I$ being the interference area. 

\begin{proof}
    See Appendix B.
\end{proof}

\begin{remark}
To address interference caused by UEs' movement, we focus solely on the effects and frequency of movement for the target user within a specific time frame. Additionally, it is important to note that this interference originates from the target UE or the BS associated with it.
\end{remark}

\section{Interference Propagation Analysis}
\label{inter_pro_analysis}
In this section, we analyze the outage probability, infection rate, and interference propagation intensity of the target UEs.
\subsection{Outage Probability}
\begin{enumerate}
    \item Outage probability before UE movement: If $U_0$ is only associated with the nearest BS at a fixed location, the outage probability is derived as follows:
    \begin{equation}\label{Op1}
    P_o=1-\sum_{x  = 0}^{k_s-1}\frac{(-1)^x}{x!}\frac{\partial ^x}{\partial ^{s^x}}\exp\left(\frac{-sT{\sigma^{2} }}{{P\eta}}\right)\mathcal{L}_{I_k}\left(\frac{sT}{\eta }\right).
    \end{equation}
   \begin{proof}
    See Appendix C.
\end{proof}
    \item Outage probability after UE movement: 
    \begin{equation}\label{Op2}
        P_o'=1-\sum_{x  = 0}^{k_s-1}\frac{(-1)^x}{x!}\frac{\partial ^x}{\partial ^{s^x}}\exp\left(\frac{-sT{\sigma^{2} }}{{P\eta}}\right)\mathcal{L}_{I_k'}\left(\frac{sT}{\eta }\right).
    \end{equation}
\end{enumerate}
   \begin{proof}
   The proof is similar to that for $P_o$ above.
\end{proof} 
\begin{remark}
By substituting expressions (\ref{eq:l1}) and (\ref{eq:l2}) into (\ref{Op1}) and (\ref{Op2}), we can derive the outage probability expressions. It is evident that, as the interference power increases, the density of BSs and RISs leads to a higher outage probability.
\end{remark}

\subsection{Infection and Recovery Rates}
The infection rate is the likelihood of transitioning from an uninfected to an infected state before and after a UE moves. Specifically, it denotes the probability that the SINR falls from above the threshold to below it due to UE mobility. To this end:
\begin{align}
    \beta&=(1-P_o)P_o'\nonumber\\
    &=\sum_{x  = 0}^{k_s-1}\frac{(-1)^x}{x!}\frac{\partial ^x}{\partial ^{s^x}}\exp\left(\frac{-sT{\sigma^{2} }}{{P\eta}}\right)\mathcal{L}_{I_k}\left(\frac{sT}{\eta _S}\right)\nonumber\\
    &\hspace{0.3cm}-\sum_{y=0}^{k_s-1}\sum_{x = 0}^{k_s-1}\frac{(-1)^x}{x!}\frac{(-1)^y}{y!}\frac{\partial ^x}{\partial ^{s^x}}\exp\left(\frac{-sT{\sigma^{2} }}{{P\eta}}\right)\mathcal{L}_{I_k}\left(\frac{sT}{\eta _S}\right)\nonumber\\
    &\hspace{0.3cm}\times\frac{\partial ^y}{\partial ^{s^y}}\exp\left(\frac{-sT{\sigma^{2} }}{{P\eta}}\right)\mathcal{L}_{I_k'}\left(\frac{sT}{\eta _S}\right).
\end{align}

\begin{remark}
    Several factors can cause a UE to change from an uninfected state to an infected state, including the target UE moving into an area with severe interference, an increase in the power of any interfering BS, or the movement of an interfering UE that propagates the interference.
\end{remark}

The recovery rate describes the change from an infected state to a susceptible state before and after the UE moves. Specifically, it refers to the probability that the SINR, initially below the threshold, exceeds it after the movement. To this end:
\begin{align}
    \gamma&=P_o(1-P_o')\nonumber\\
    &=\sum_{x = 0}^{k_s-1}\frac{(-1)^x}{x!}\frac{\partial ^x}{\partial ^{s^x}}\exp\left(\frac{-sT{\sigma^{2} }}{{P\eta}}\right)\mathcal{L}_{I_k'}\left(\frac{sT}{\eta _S} \right)\nonumber\\
    &\hspace{0.3cm}-\sum_{y=0}^{k_s-1}\sum_{x  = 0}^{k_s-1}\frac{(-1)^x}{x!}\frac{(-1)^y}{y!}\frac{\partial ^x}{\partial ^{s^x}}\exp\left(\frac{-sT{\sigma^{2} }}{{P\eta}}\right)\mathcal{L}_{I_k'}\left(\frac{sT}{\eta _S} \right)\nonumber\\
    &\hspace{0.3cm}\times \frac{\partial ^y}{\partial ^{s^y}}\exp\left(\frac{-sT{\sigma^{2} }}{{P\eta}}\right)\mathcal{L}_{I_k}\left(\frac{sT}{\eta _S}\right).
\end{align}


\subsection{Interference Propagation Intensity}
Having derived the infection and recovery rates, we need to consider these two variables together to obtain a measure of the rate of interference propagation, as follows:
\begin{equation}
\begin{split}
    R_0=\frac{1/\sum_{x  = 0}^{k_s-1}\frac{(-1)^x}{x!}\frac{\partial ^x}{\partial ^{s^x}}\exp\left(\frac{-sT{\sigma^{2} }}{{P\eta}}\right)\mathcal{L}_{I_k'}\left(\frac{sT}{\eta _S}\right)-1}{1/\sum_{x  = 0}^{k_s-1}\frac{(-1)^x}{x!}\frac{\partial ^x}{\partial ^{s^x}}\exp\left(\frac{-sT{\sigma^{2} }}{{P\eta}}\right)\mathcal{L}_{I_k}\left(\frac{sT}{\eta _S}\right)-1}.
\end{split}
\end{equation}
\begin{remark}
From the above expressions, it can be concluded that when UE movement is not taken into account, interference is minimal and does not propagate significantly. In this case, the propagation of interference is independent of the density of BSs, RISs, and UEs' behavior. However, when mobility is introduced and based on fixed association rules, increasing the density of BSs, the density of UEs, and the number of RIS elements will lead to stronger interference propagation. Additionally, as will be shown in the performance evaluation section that follows, the intensity of interference propagation increases as more interfering UEs move closer to the target UE. Nonetheless, if the UE's intended signal power is particularly high, the interference's propagation intensity will not be significant.
\end{remark}

\section{Numerical Results and Discussions}
\label{results}
In this section, we present performance evaluation results for the considered multi-RIS-assisted multi-cell downlink wireless system under various conditions, which verify our analytical framework for the system's coverage probability and interference propagation intensity. The role of RISs in interference-limited scenarios, such as cellular communications with cell division, spectrum management, and beamforming, as well as in interference-dominated scenarios, like stadiums and other large venues, spectrum sharing, and UAV Communications, is investigated. Unlike a traditional RIS-assisted communication system, the SINR in an interference-dominated scene may remain below the threshold, and the useful signal power may even be lower than the interference power. Unless otherwise specified, we use the default system parameters as outlined in Table~\ref{tab:t2}.
\begin{table}[!t]
\renewcommand\arraystretch{1} 
\centering
\captionsetup{font={small}}
\caption{Simulation parameters.}
\label{tab:t2} 
\begin{tabular}{|c|c|}    
\hline
\textbf{Parameters} & \textbf{Values} \\
\hline\hline
Bandwidth of the carrier&$10$ MHz\\
\hline
Noise power $\sigma^2$& $-90$ dB\\
\hline
Transmit power P& $[-20,30]$ dBm\\
\hline
PL exponents $\alpha$&$3$\\
\hline
Power allocation coefficients $a_d,a_r$&$0.6,0.4$\\
\hline
Gamma distribution coefficients $m_1,m_2$&$2$\\
\hline
Monte Carlo trials & $10^5$\\
\hline
SINR threshold $T$ & $10^{-2}$\\
\hline
PL of the channel $C$ &$6.3326\times 10^{-5}$\\
\hline
Number of the RIS elements $N$ & 200\\
\hline

Density of BSs ($\lambda_B$), RISs ($\lambda_R$), and UEs ($\lambda_U$) & $10^{-5}$, $10^{-5}$, $10^{-2}$ \\
\hline
\end{tabular}
\end{table}
\subsection{Signal Distribution and Outage Probability}
The cumulative distribution function of the desired signal, including curves obtained by both the derived analytical expression and Monte Carlo simulations, is illustrated in Fig.~\ref{fig:DesiredS}. As observed, the simulation results closely align with the theoretical expression, confirming the accuracy of the signal power distribution model under the studied conditions. The desired signal distribution is primarily influenced by the PL and the remaining channel model parameters, including the Nakagami-$m$ fading model, which characterizes the small-scale fading. The distribution of the desired signal, which represents the total received signal power from both direct and reflected paths, follows a gamma distribution.

The outage probability with and without UE movement as a function of transmit power is shown in Fig.~\ref{fig:OP}. It is evident that this probability decreases as transmit power increases, as expected, since higher transmit power enhances the SINR. Additionally, the UE movement case exhibits a higher outage probability than the no-movement case, especially at lower powers. This increase in outage probability is attributed to interference propagation, in which the UE's movement spreads the interference, thereby reducing the effective SINR.

\begin{figure}[t]
    \centering
        \captionsetup{font={small}}
    \includegraphics[width=1.07\linewidth]{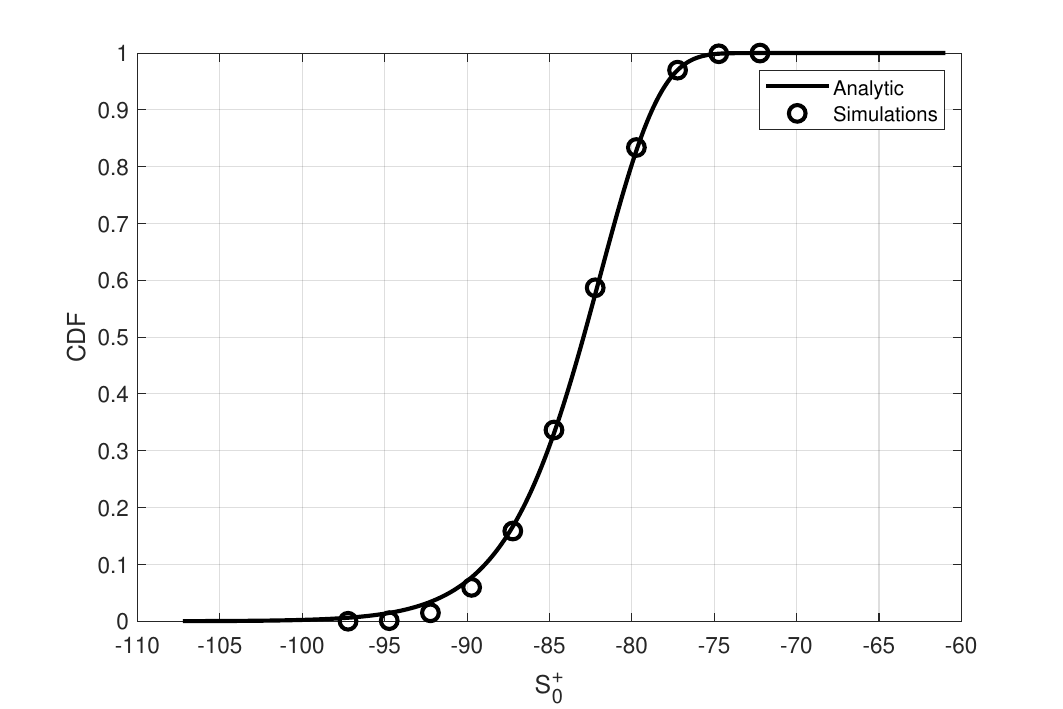}
    \caption{Cumulative distribution function of the desired signal.}
    \label{fig:DesiredS}
\end{figure}

\begin{figure}[t]
    \centering
        \captionsetup{font={small}}
    \includegraphics[width=1.07\linewidth]{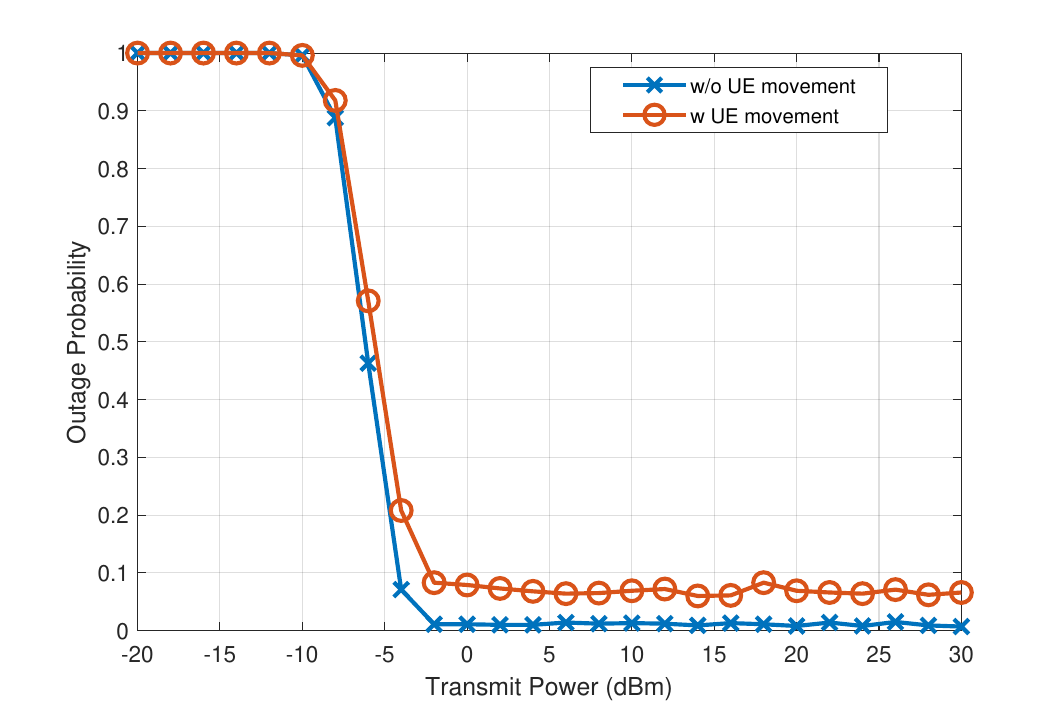}
    \caption{Outage probability with/without UE movement.}
    \label{fig:OP}
\end{figure}

\begin{figure*}[htbp]
    \centering
    \captionsetup{font={small}}
    \begin{subfigure}{0.328\textwidth}
        \centering
        \includegraphics[width=\textwidth]{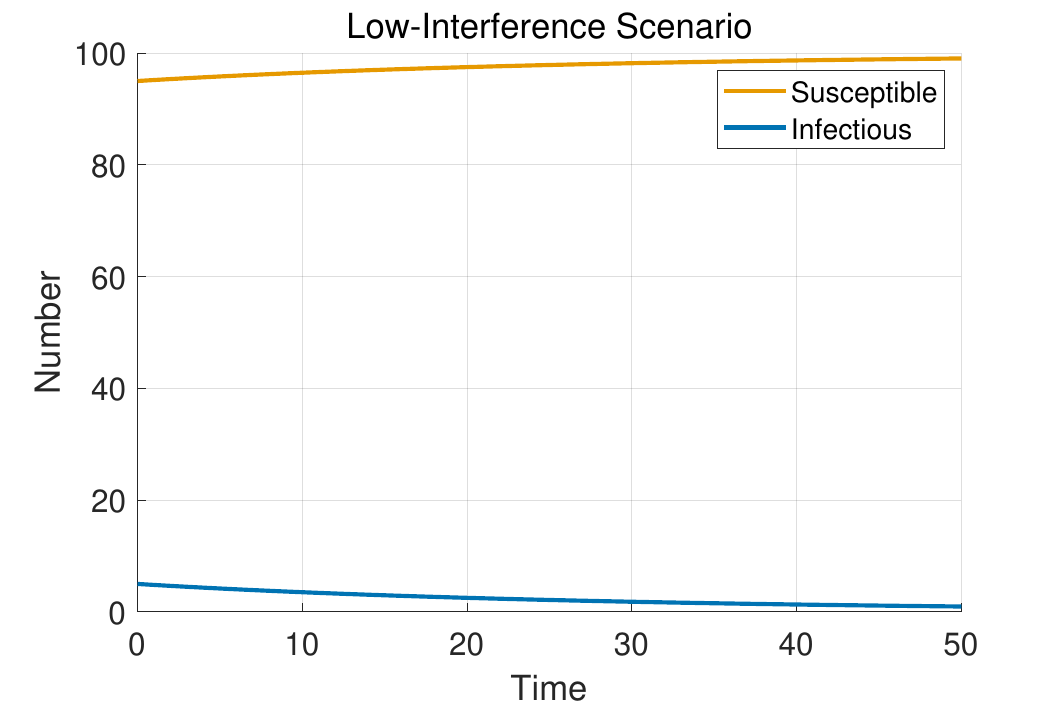}
        \caption{}
        \label{fig:sis1-1}
    \end{subfigure}
    \hfill
    \begin{subfigure}{0.328\textwidth}
        \centering
        \includegraphics[width=\textwidth]{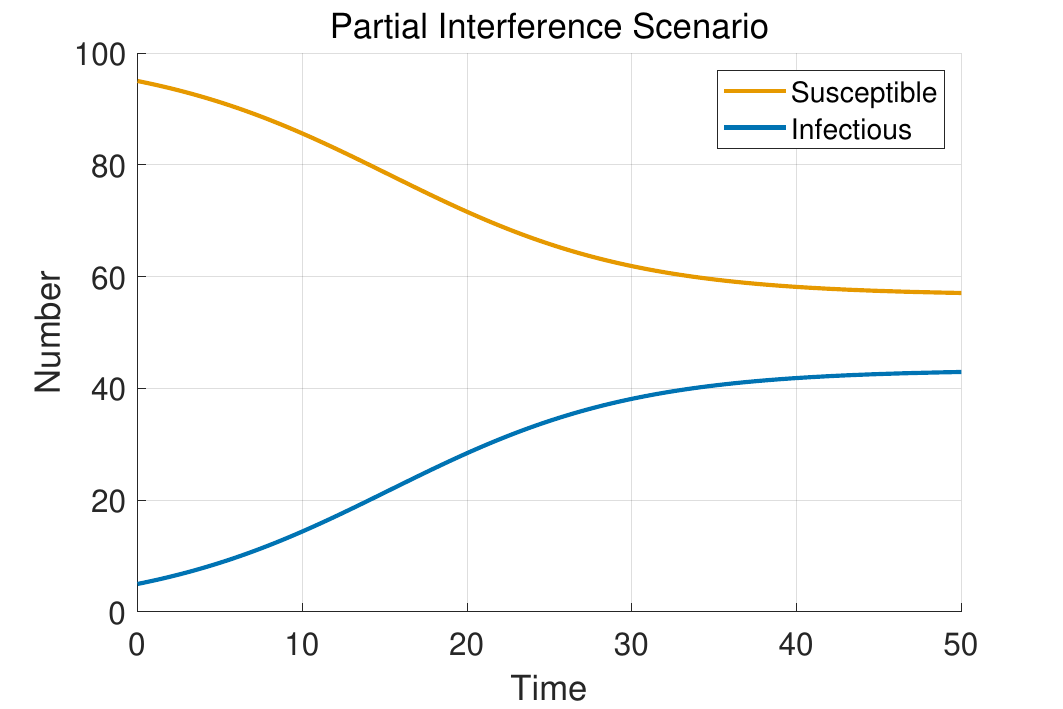}
        \caption{}
        \label{fig:sis1-2}
    \end{subfigure}
    \hfill
    \begin{subfigure}{0.328\textwidth}
        \centering
        \includegraphics[width=\textwidth]{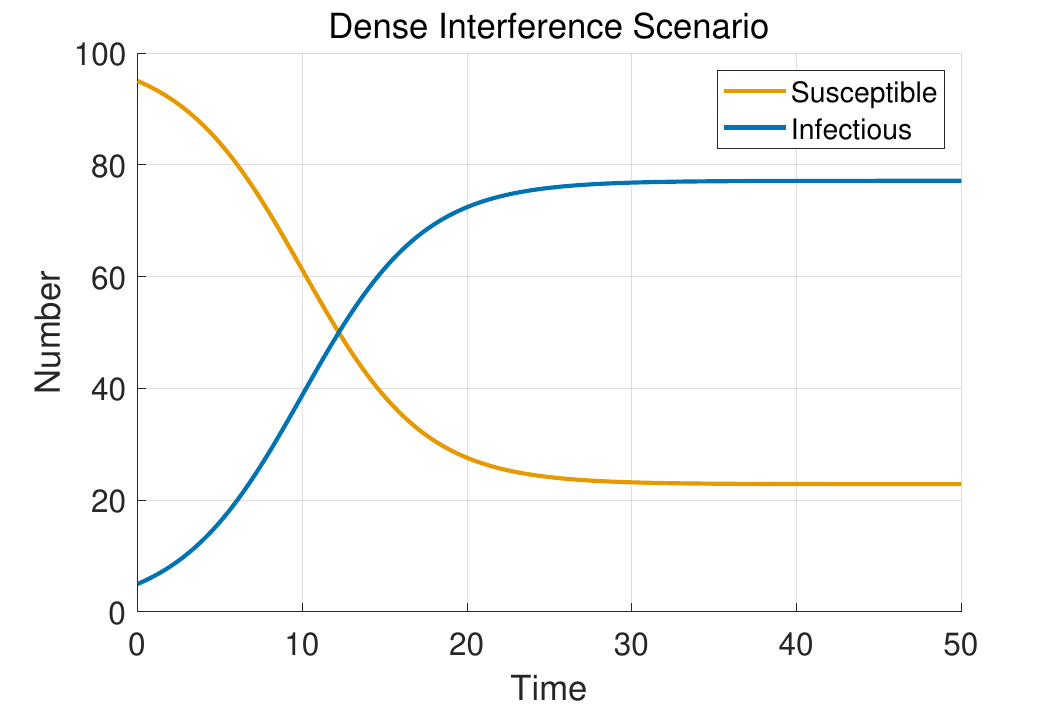}
        \caption{}        
        \label{fig:sis1-3}
    \end{subfigure}
    \hfill
    \begin{subfigure}{0.328\textwidth}
        \centering
        \includegraphics[width=\textwidth]{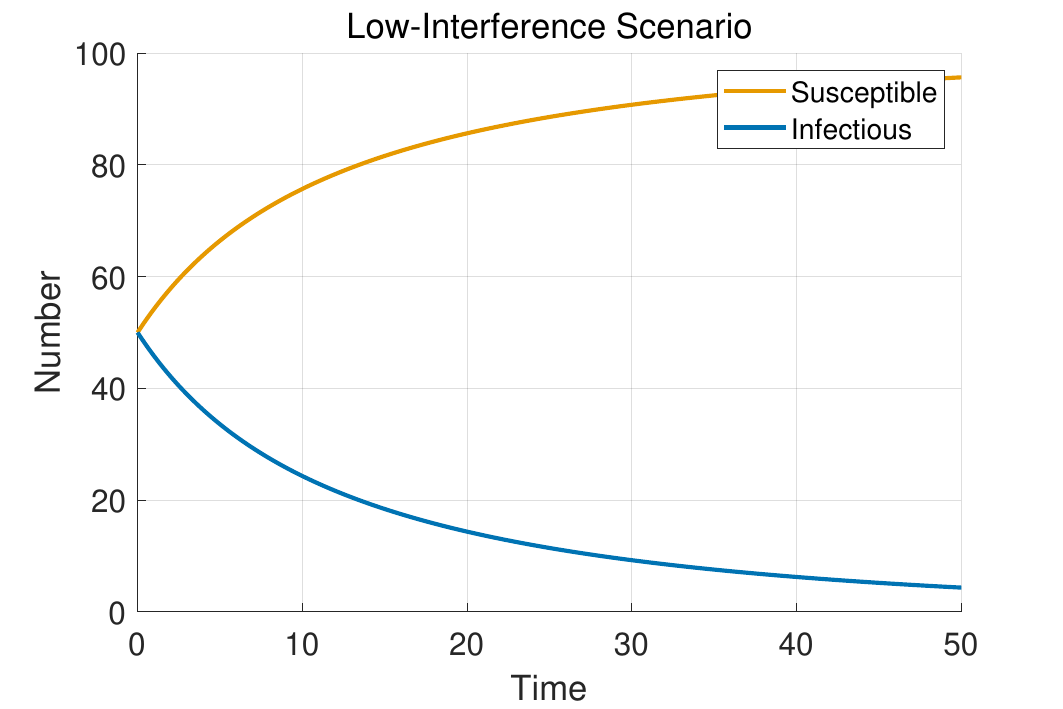}
        \caption{}
        \label{fig:sis2-1}
    \end{subfigure}
    \hfill
    \begin{subfigure}{0.328\textwidth}
        \centering
        \includegraphics[width=\textwidth]{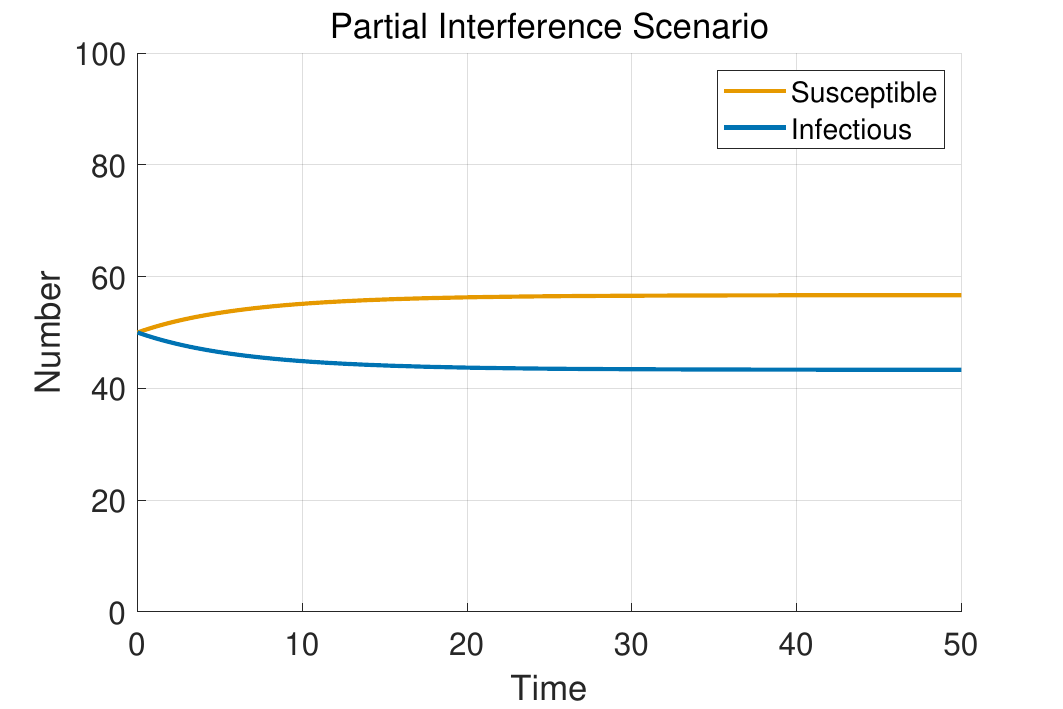}
        \caption{}
        \label{fig:sis2-2}
    \end{subfigure}
    \hfill
    \begin{subfigure}{0.328\textwidth}
        \centering
        \includegraphics[width=\textwidth]{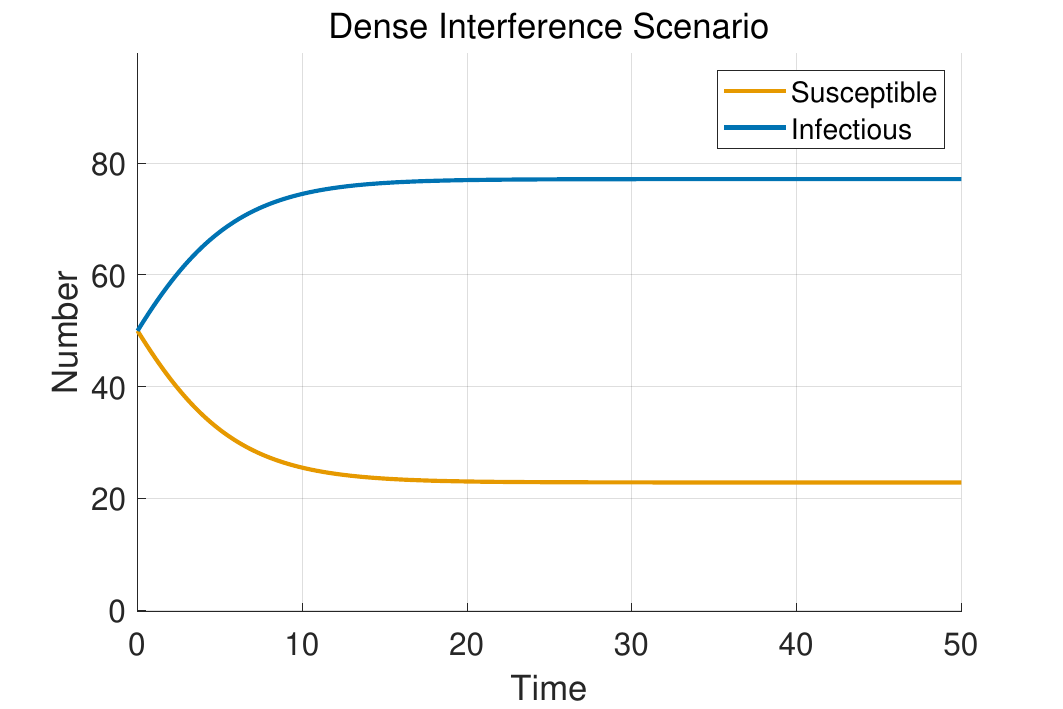}
        \caption{}        
        \label{fig:sis2-3}
    \end{subfigure}
    \caption{Interference propagation under different initial states and interference conditions: low ((a) and (d) subfigures); medium ((b) and (e) subfigures); and high ((c) and (f) subfigures). The initial number of susceptible UEs and the number of infected UEs are $95$ and $5$, respectively, in the top three subfigures, whereas their respective numbers in the bottom three subfigures are $50$ and $50$.
    }
    \label{fig:sis}
\end{figure*}

\subsection{Interference Propagation}
Figure~\ref{fig:sis} shows six subgraphs depicting the interference propagation process under different initial states and interference scenarios, for the case where the transmission power is set to $-5$~dBm. The different initial states refer to the number of infected and susceptible UE at the time the interference in the communication system is evaluated. The interference radius, $r_I$, was set to $10$ meters, meaning that UEs are affected by interference within a $10$-meter radius. The UE densities $\lambda_U$ are $10^{-3}$, $5^{-3}$, and $10^{-2}$, respectively, in Figs.~\ref{fig:sis}(a)-(c). The number of initially infected UEs is $95$, while the number of susceptible UEs is $5$. In Fig.~\ref{fig:sis}(a), where the UE density is low, the number of infected UEs gradually decreases over time, while the number of susceptible UEs gradually increases. This indicates that there are not many interfering UEs in the current scenario, and the target UEs might have moved away from the interfering UEs or the interfering BS. In Fig.~\ref{fig:sis}(b), where the UE density is increased, the number of infected UEs gradually increases, the number of susceptible UEs decreases, and the situation eventually stabilizes. This suggests that during the communication process, some interfering UEs moved away from the target UEs, while others moved closer to them, ultimately leading some UEs to transition from susceptible to infected. However, after this, further movements of these infected UEs do not lead to the infection of new UEs. This also indicates that, in such cases, the interference propagation caused by the RIS is limited. In Fig.~\ref{fig:sis}(c), with a very high UE density, the number of infected UEs continues to increase over time while the number of susceptible UEs decreases. This scenario is typical of crowded places such as amusement parks, stadiums, and shopping malls. The results suggest that when the number of interfering UEs is large and they move rapidly within the scene, interference from RIS reflections propagates repeatedly among different UEs. The findings from Fig.~\ref{fig:sis} are summarized as follows.

\begin{observation}\label{O1}
In sparsely populated communication environments with limited sources of interference, the inherent spatial separation between the UE, combined with the nature of the RIS beam focusing, limits interference propagation. This phenomenon reflects a subcritical state ($R_0 < 1$) \cite{subcritical} of the interference propagation process, analogous to less infectious diseases in epidemic theory. In contrast, in high-density UE deployments, UE mobility and interference beamforming create a supercritical state ($R_0>1$), leading to persistent or even escalating interference propagation. This verifies that RIS-induced interference propagation is a spatially dependent epidemic process.
\end{observation}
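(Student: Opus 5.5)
Observation~\ref{O1} is an empirical statement, so the plan is to justify it by connecting the sign of $R_0-1$ to the SIS dynamics and then showing analytically that $R_0$ grows with the UE density through $\mathcal{L}_{I_k'}$.

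\emph{SIS threshold.} Write $X(t)$ with $S(t)=N_U-X(t)$. The SIS system then reduces to the logistic equation $\dot X=X\left(\beta N_U-\mu-\beta X\right)$. Its linearization at the disease-free equilibrium $X=0$ has growth rate $\mu(\beta N_U/\mu-1)$. I would identify the paper's $R_0$ with this effective reproduction ratio, normalizing the contact term so that $R_0=\beta/\mu$ as in the metric definition.
\begin{itemize}
\item If $R_0<1$, then $X\equiv 0$ is globally asymptotically stable, so infected UEs die out. This is the subcritical regime of Fig.~\ref{fig:sis}(a).
\item If $R_0>1$, then $X\to X^\ast=N_U(1-1/R_0)>0$, and $X(t)$ is increasing whenever $X(0)<X^\ast$. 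This is the persistent or escalating regime of Fig.~\ref{fig:sis}(c).
\end{itemize}
Both facts follow from the explicit logistic solution.

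\emph{Dependence of $R_0$ on $\lambda_U$.} Let $C_1=1-P_o$ and $C_2=1-P_o'$ denote the coverage probabilities. Then
\[
R_0=\frac{P_o'(1-P_o)}{P_o(1-P_o')}=\frac{1/C_2-1}{1/C_1-1}.
\]
The density $\lambda_U$ enters only $C_2$. It does so through the last factor of (\ref{eq:l2}), $\exp\!\left(-2\pi^2\lambda_{U_{near}}\csc(2\pi/\alpha)(sc)^{2/\alpha}/\alpha\right)$ with $\lambda_{U_{near}}=\lambda_B\lambda_U\pi r_I^2$. I would show that $\mathcal{L}_{I_k'}(s)$ is strictly decreasing in $\lambda_U$ for every $s>0$. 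Since $I_k'$ stochastically dominates $I_k$, this gives $C_2\le C_1$, with $C_2$ decreasing in $\lambda_U$, and hence $R_0\ge 1$ increasing in $\lambda_U$.

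\emph{Main obstacle.} The hard part is the monotonicity of the finite derivative sum in (\ref{Op2}) with respect to $\lambda_U$. Termwise signs of $\partial^x/\partial s^x$ are not obvious. I would first bypass the sum by arguing directly from the definition $C_2=\Pr[S_0>T(I_k'+\sigma^2)/P]$: the added user-interference term is a nonnegative shot noise whose intensity is proportional to $\lambda_U$. Coupling PPPs of increasing intensity gives pathwise monotonicity of $I_k'$, and hence monotonicity of $C_2$, without touching the series.

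\emph{Caveat on the dichotomy.} Under the literal formula $R_0=\beta/\mu$ one only gets $R_0\ge1$, so the subcritical case cannot come from this ratio alone. It must come from the contact-scaled ratio $R_0=\beta N_U/\mu$ with $N_U\propto\lambda_U\pi r_I^2$, which tends to $0$ as $\lambda_U\to0$ and grows unboundedly as $\lambda_U$ increases. By continuity there is a critical density at which $R_0=1$, separating the two regimes. I would then confirm numerically, with the Table~\ref{tab:t2} parameters, that the three densities used in Fig.~\ref{fig:sis} fall below, near, and above this critical value.
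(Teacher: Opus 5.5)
The paper gives no analytical argument for this observation. It reads it off the simulated SIS trajectories in Fig.~\ref{fig:sis}(a)--(c), with decay at $\lambda_U=10^{-3}$ and saturation at $\lambda_U=10^{-2}$, and labels these sub- and supercritical. Your route is genuinely different: you reduce the SIS system to a logistic equation and show that $R_0$ grows with $\lambda_U$ through the last factor of (\ref{eq:l2}). The monotonicity half is sound. The PPP coupling, however, only controls the exact coverage, not the Gamma-approximated coverage that defines the paper's $R_0$; you can argue at the level of the closed form instead. As intended in Appendix~C, the finite sum in (\ref{Op2}) is $\mathbb{E}\bigl[\Gamma(k_s,Y)/\Gamma(k_s)\bigr]$ with $Y=T(I+\sigma^2/P)/\eta$, which is a decreasing functional of $Y$. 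The extra factor $e^{-a\lambda_U s^{2/\alpha}}$ (with $a>0$ for $\alpha>2$) is the Laplace transform of a positive $(2/\alpha)$-stable variable whose scale increases with $\lambda_U$. Hence the closed-form $P_o'$ itself increases with $\lambda_U$.

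The gap is in the subcritical half. Your caveat is exactly right and decisive: $\beta-\mu=(1-P_o)P_o'-P_o(1-P_o')=P_o'-P_o\ge 0$. So the paper's $R_0=\beta/\mu$ satisfies $R_0\ge 1$ identically and approaches $1$ only as $\lambda_{U_{near}}\to 0$. ``$R_0<1$ in sparse environments'' is therefore false for the quantity the statement names, and no argument can establish it; your first bullet is vacuous under the normalization you adopt. The repair does not close this. In the paper, $N_U$ is the fixed total population ($100$ in Fig.~\ref{fig:sis}), and with it the contact-scaled threshold is $\beta N_U/\mu\ge N_U\gg 1$ in every panel. By contrast, $\lambda_U\pi r_I^2\approx 0.3$ at $\lambda_U=10^{-3}$, $r_I=10$\,m, so it cannot be the population in $S(t)+X(t)=N_U$. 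It is a mean contact number in a degree-based SIS model that you are introducing, with a different $R_0$. You would therefore be proving a reformulated observation and should say so.

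A true version that keeps the paper's $R_0$ follows from your own logistic analysis. For $R_0>1$, $X(t)\to N_U(1-1/R_0)$, and $R_0\downarrow 1$ as $\lambda_U\to 0$, so the endemic level vanishes in sparse networks. The decay in Fig.~\ref{fig:sis}(a) only needs $X(0)>N_U(1-1/R_0)$. With the caption's $X(0)=5$, this means $R_0<1/0.95\approx 1.05$, which matches the near-unity values reported in Fig.~\ref{lowI_B_N}. What the model supports is near-critical ($R_0\to 1^{+}$) versus strongly supercritical behaviour, not a crossing of $R_0=1$.
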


\begin{figure}[t]
    \centering    \includegraphics[width=1.09\linewidth]{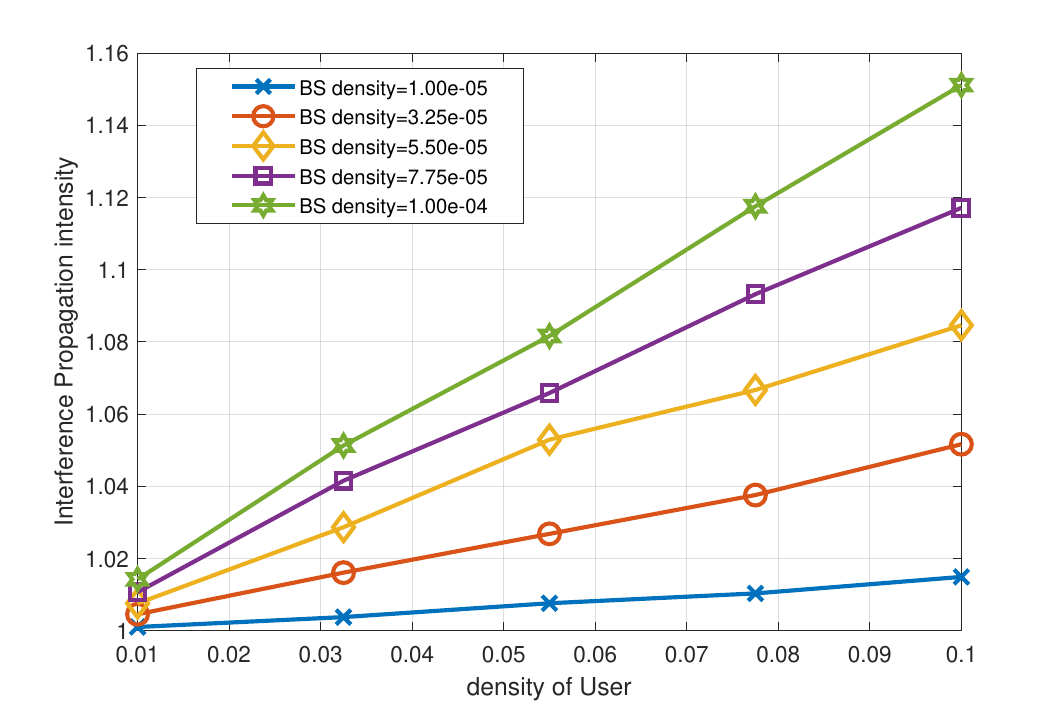}
    \caption{\small Interference propagation intensity versus the density of UEs for different BS density values.}
    \label{regen_densityB}
\end{figure}

Figures~\ref{fig:sis}(d)-(f) provide a comparative perspective to Figs.~\ref{fig:sis}(a)-(c), focusing on scenarios with increased initial infection levels: $50$ infected UEs versus only $5$ in the earlier cases under the same interference settings. This comparison illustrates how the severity of initial interference affects the system’s capacity to amplify interference propagation.
Under a low UE density ($\lambda_U = 10^{-3}$), Fig.~\ref{fig:sis}(d) reveals a key insight: even with a high number of initially infected UEs, the system naturally evolves towards a state with minimal interference. This demonstrates that, even under initially harsh conditions, in sparse networks with low interference coupling, large-scale interference propagation fails to materialize. Figure~\ref{fig:sis}(e) shows that, according to a partial interference scenario with moderate UE density ($\lambda_U = 5^{-3}$), although the higher initial infection level accelerates the early growth in interference, the overall system still stabilizes at a relatively low infection level. This suggests that, in moderately dense environments, interference propagation is possible but self-limiting due to constrained exposure opportunities and limited RIS-induced coupling paths. In contrast, Fig.~\ref{fig:sis}(f) demonstrates the high interference risk in dense UE environments ($\lambda_U = 10^{-2}$). The initially moderate infection level rapidly escalates, and the number of infected UEs quickly saturates near the maximum. This indicates strong propagation effects, facilitated by dense UE and RIS reflections. In such scenarios, e.g., crowded stadiums or public venues, the combination of UE mobility and passive reflection from RISs can result in epidemic-like interference propagation throughout the network. It can thus be concluded that interference control must be an essential feature of densely deployed RIS-assisted communication systems, as summarized in the following sections.

\begin{observation}\label{O2}
The sensitivity of the interference propagation process to initial conditions reveals the nonlinear dynamics in RIS-empowered communication systems. The higher the initial proportion of infected UEs, the faster the convergence to the interference propagation. This is consistent with the phase transition behavior observed in epidemiological SIS models, where higher initial infection densities bypass the substable state and push the system into an equilibrium dominated by persistent disturbances~\cite{PhaseTrans}. Therefore, RIS-assisted communication systems require proactive interference containment strategies, such as adaptive RIS switching strategies \cite{RISswitch} or user-aware scheduling~\cite{10729719}.
\end{observation}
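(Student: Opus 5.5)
We would make Observation~\ref{O2} precise using only the SIS dynamics of the epidemic model, treating $\beta$ and $\mu$ as the constants fixed by the outage probabilities in Section~\ref{inter_pro_analysis}. The claim to prove is then: for fixed $(\beta,\mu,N_U)$, trajectories with a larger initial number of infected UEs $X(0)$ reach the persistent-interference equilibrium sooner, and never fall below trajectories started lower.

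\textbf{Step 1: reduce to a logistic equation.} Substituting $S(t)=N_U-X(t)$ into the second equation of the SIS system gives
\begin{equation*}
\frac{dX}{dt}=\beta X\,(K-X),\qquad K:=N_U-\frac{\mu}{\beta}.
\end{equation*}
This is a scalar Riccati/logistic equation. Its explicit solution, for $K\neq 0$, is
\begin{equation*}
X(t)=\frac{K X_0}{X_0+(K-X_0)e^{-\beta K t}},\qquad X_0:=X(0).
\end{equation*}
From this formula:
\begin{itemize}
\item If $\beta N_U>\mu$, i.e.\ $K>0$, every $X_0\in(0,N_U]$ converges monotonically to $K$. This is the supercritical regime of Observation~\ref{O1}.
\item If $\beta N_U\le\mu$, every trajectory decays to $0$. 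This is the subcritical regime.
\end{itemize}
The paper defines $R_0$ as $\beta/\mu$, so we would identify the threshold with the population-scaled quantity $R_0N_U=\beta N_U/\mu$ and state that choice explicitly.

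\textbf{Step 2: monotonicity in the initial condition.} The right-hand side is locally Lipschitz, so solutions are unique and cannot cross. Hence $X_0<\tilde X_0$ implies $X(t)<\tilde X(t)$ for all $t\ge 0$; the same conclusion also follows directly from the closed form. This gives the ordering part of the statement: at every time, the bottom row of Fig.~\ref{fig:sis} lies above the top row with the same parameters.

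\textbf{Step 3: speed of convergence.} In the supercritical case, define the time to reach a fraction $(1-\epsilon)$ of the equilibrium by solving $X(t_\epsilon)=(1-\epsilon)K$ in the closed form:
\begin{equation*}
t_\epsilon(X_0)=\frac{1}{\beta K}\ln\!\left(\frac{(1-\epsilon)(K-X_0)}{\epsilon X_0}\right),\qquad X_0<(1-\epsilon)K.
\end{equation*}
This is strictly decreasing in $X_0$, which proves the claim that a higher initial infected proportion gives faster convergence. In particular, starting from $50$ infected UEs reaches saturation sooner than starting from $5$. We would also remark that $t_\epsilon$ is decreasing in $\beta K$, so denser UE deployments, which raise $P_o'$ through $\lambda_{U_{near}}$ in \eqref{eq:l2} and hence raise $\beta$, accelerate convergence as well.

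\textbf{Main obstacle: the phase-transition wording.} The deterministic SIS model has a unique globally attracting equilibrium: $0$ or $K$, depending only on $\beta N_U/\mu$ and not on $X_0$. So ``bypassing a substable state'' cannot hold literally for the ODE, and we would not try to prove it. Instead we would prove the precise substitute above: the equilibrium is independent of $X_0$, while the transient duration $t_\epsilon$ is strictly decreasing in $X_0$. We would then attribute the apparent metastability near low infection levels to the finite-population stochastic SIS process that the Monte Carlo simulation actually runs. There, extinction before takeoff has probability roughly $(\mu/(\beta N_U))^{X_0}$ (a branching-process approximation, cited rather than derived), and this probability decreases sharply with $X_0$, which is the sense in which a higher initial infection level pushes the system into the persistent-interference equilibrium.
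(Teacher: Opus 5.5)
Your argument is correct, and it takes a genuinely different route from the paper. The paper gives no derivation for Observation~\ref{O2}. The claim is read off the simulated trajectories in Fig.~\ref{fig:sis}, comparing the top row ($5$ initially infected UEs) with the bottom row ($50$) at equal UE density. The phase-transition wording is imported from \cite{PhaseTrans} rather than derived from the SIS system of Section~\ref{system_model}.

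By reducing that system to $\dot X=\beta X(K-X)$ you obtain three facts:
\begin{itemize}
\item trajectories are ordered in $X(0)$;
\item the hitting time $t_\epsilon(X_0)$ is strictly decreasing in $X_0$;
\item the limit, $0$ or $K$, does not depend on $X_0$.
\end{itemize}
The hitting time is the right way to formalize ``faster convergence,'' since the asymptotic rate $\beta K$ does not depend on $X_0$. The first two facts prove the part of the observation that is true. The third shows that the clause about higher initial densities ``bypassing the substable state'' and selecting the persistent equilibrium is not a property of the paper's deterministic model. A purely empirical reading cannot reveal this.

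Moving that clause to finite-population stochastic extinction is a sensible reading, but it stays heuristic. The branching approximation with per-capita infection rate $\beta N_U$ is accurate only for $X_0\ll N_U$, not for $X_0=50$ out of $100$. In return, the paper's route reflects whatever its Monte Carlo actually simulated, such as spatial contacts and randomness, which the ODE does not represent. Your correction of the threshold to $\beta N_U/\mu$ for the mass-action equation is also right.

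Two small repairs.

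First, your ``in particular'' ($50$ saturates before $5$) requires $50<(1-\epsilon)K$. If $K$ lay between the two initial values, the upper trajectory would descend towards $K$. Its entry time into the $\epsilon$-band can then exceed that of the lower one (take $K$ close to $5$). So monotonicity in $X_0$ holds only for initial values below the equilibrium. With the paper's own rates this hypothesis holds automatically:
\begin{itemize}
\item the extra term in $I_k'$ is nonnegative, so $P_o'\ge P_o$;
\item hence $\beta-\mu=P_o'-P_o\ge 0$, so $\mu/\beta\le 1$ and $K\ge N_U-1$.
\end{itemize}
The same identity gives $\beta/\mu\ge 1$ and $\beta N_U/\mu\ge N_U$. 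So with these rates your threshold is always exceeded and the paper's $R_0$ is never below $1$. This reinforces your point that the decaying panels of Fig.~\ref{fig:sis} do not come from the ODE.

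Second, $t_\epsilon$ is not a function of $\beta K$ alone, because the logarithm also depends on $K$. State the density remark as monotonicity in $\beta$ at fixed $K$, or check it for the joint change in $(\beta,K)$ that a larger $P_o'$ induces.
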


\begin{figure}[t]
    \centering
    \includegraphics[width=1.09\linewidth]{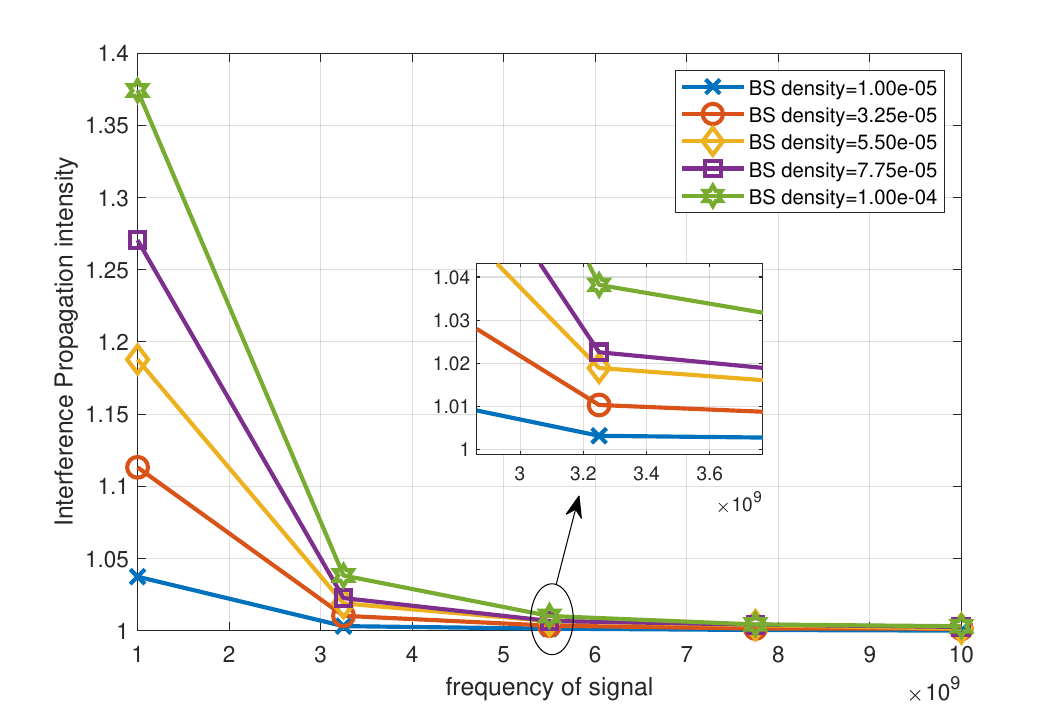}
    \caption{\small Interference propagation intensity versus the signal frequency for different BS density values and low interference conditions.}
    \label{lowI_B_freq}
\end{figure}
\begin{figure}[t]
    \centering
    \includegraphics[width=1.09\linewidth]{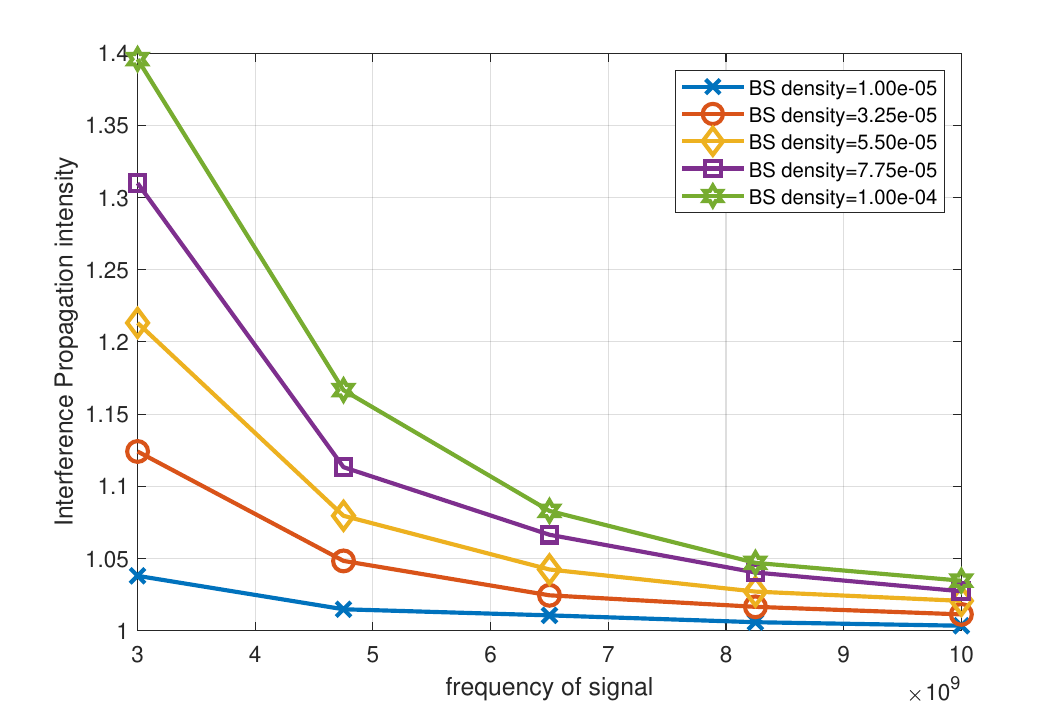}
    \caption{\small Interference propagation intensity versus the signal frequency for different BS density values and high interference conditions.}
    \label{highI_B_freq}
\end{figure}

The impact of different BS densities ($\lambda_B$) and densities of mobile UEs ($\lambda_U$) on the interference propagation intensity is demonstrated in Fig.~\ref{regen_densityB}. As the BS density increases from $10^{-5}$ to $10^{-4}$, the intensity of interference propagation significantly increases. This is mainly because more BSs imply more signal sources, leading to interference signals being reflected and amplified within the RIS-enabled smart wireless environment. In addition, the increase in mobile UE density also improves the efficiency of interference propagation, as more UEs move into the interference area, promoting the propagation of interference signals in the crowd. This phenomenon is particularly evident in densely deployed macro BSs or ultra-dense residential environments in cities, such as smart cities or industrial parks. Although RIS deployment improves signal coverage, high-density BS and UE groups can cause interference to propagate and degrade communication quality. 
\begin{observation} \label{O3}
In system scenarios with high BS and UE densities, it is necessary to reasonably control BS deployment density and optimize RIS reflection configuration to suppress interference propagation. 
\end{observation}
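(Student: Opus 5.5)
The plan is to justify Observation~\ref{O3} through the closed-form propagation intensity $R_0$ of the previous section. That expression depends on $\lambda_B$ and $\lambda_U$ only through the Laplace transforms \eqref{eq:l1} and \eqref{eq:l2}. The goal is to show that $R_0$ increases in both densities, so that a target $R_0\le 1$ can only be met by limiting $\lambda_B$ or by reducing the effective reflected interference. The latter could be done by reducing the RIS term, i.e., $N$ or phase coupling.

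\textbf{Step 1: write $R_0$ in a tractable form.} Set $Q=\sum_{x=0}^{k_s-1}\frac{(-1)^x}{x!}\partial_s^x\big[e^{-sT\sigma^2/(P\eta)}\mathcal{L}_{I_k}(sT/\eta_s)\big]$, which equals $1-P_o$, and define $Q'=1-P_o'$ analogously. Then
\begin{equation}
R_0=\frac{\beta}{\mu}=\frac{(1-P_o)P_o'}{P_o(1-P_o')}=\frac{1/Q'-1}{1/Q-1}.
\end{equation}
This is increasing in $P_o'$ and decreasing in $P_o$.

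\textbf{Step 2: monotonicity in $\lambda_U$.} By \eqref{eq:l2}, $\mathcal{L}_{I_k'}=\mathcal{L}_{I_k}\cdot\exp(-c_1\lambda_B\lambda_U\pi r_I^2 s^{2/\alpha})$ for some $c_1>0$. Since $\mathcal{L}_{I_k}$ does not involve $\lambda_U$, only $P_o'$ moves with $\lambda_U$. The extra factor makes $I_k'$ stochastically larger as $\lambda_U$ grows, because it is the transform of an independent added positive interferer field. The Gamma-type coverage $Q'=\mathbb{E}[\bar F_{S_0}(T(I_k'+\sigma^2)/P)]$ is the expectation of a decreasing function of $I_k'$. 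Hence $Q'$ decreases and $P_o'$ increases, which gives $\partial R_0/\partial\lambda_U>0$. I would argue this directly through stochastic ordering, rather than by differentiating the finite sum of derivatives, to avoid sign issues in the alternating series.

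\textbf{Step 3: monotonicity in $\lambda_B$.} Both transforms contain the common factor $\exp(-\lambda_B G(s))$, with $G$ collecting the direct and RIS-reflected terms, so both $P_o$ and $P_o'$ rise with $\lambda_B$. The net effect on $R_0$ therefore needs a comparison. Writing $\mathcal{L}_{I_k'}=\mathcal{L}_{I_k}e^{-\lambda_B H(s)}$, the mobility-induced term itself scales with $\lambda_B$ through $\lambda_{U_{near}}=\lambda_B\lambda_U\pi r_I^2$. The plan is to show that in the interference-limited regime $P_o'$ increases relative to $P_o$. To do this, approximate with the $x=0$ term (the exponential-type case $k_s\approx1$), giving $Q\approx e^{-\lambda_B a}$ and $Q'\approx e^{-\lambda_B(a+b\lambda_U)}$. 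Then
\begin{equation}
R_0\approx\frac{e^{\lambda_B(a+b\lambda_U)}-1}{e^{\lambda_B a}-1},
\end{equation}
which is increasing in $\lambda_B$ whenever $b\lambda_U>0$, since $(e^{cx}-1)/(e^{x}-1)$ increases in $x$ for $c>1$. This is the main obstacle: a rigorous argument for general $k_s$ requires controlling the derivative sum. My first attempt would be to bound it between the $k_s=1$ case and a Jensen/Alzer-type bound on the Gamma CCDF. Failing that, I would rely on the Monte Carlo evidence in Fig.~\ref{regen_densityB}.

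\textbf{Step 4: conclude.} Combining Steps 2 and 3, high $(\lambda_B,\lambda_U)$ pushes $R_0$ above $1$, the supercritical regime of Observation~\ref{O1}. Lowering $\lambda_B$, or reducing the RIS-dependent term in $G$ (proportional to $\lambda_R N^2$), lowers $R_0$. This supports the recommendation to control BS density and optimize RIS reflection configuration.
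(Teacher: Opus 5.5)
The paper gives no analytic argument for Observation~\ref{O3}. It rests on the Monte Carlo trend of Fig.~\ref{regen_densityB}, where $R_0$ grows as $\lambda_B$ goes from $10^{-5}$ to $10^{-4}$ and as $\lambda_U$ grows, together with the qualitative remark after the $R_0$ expression.

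Your Steps~1--2 genuinely add to this for the $\lambda_U$ lever. The extra factor in \eqref{eq:l2} is $\exp(-c\lambda_B\lambda_U s^{2/\alpha})$. For $\alpha>2$ this is the Laplace transform of an independent positive $(2/\alpha)$-stable variable, so for integer $k_s$ your stochastic-ordering argument is sound. The $s$-independent factor in \eqref{eq:l1} makes $\mathcal{L}_{I_k}(0)\neq 1$, but it is common to $Q$ and $Q'$ and does no harm.

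For $\lambda_B$, however, you only have the $k_s=1$ case. There the $x=0$ term is the whole sum; it is not an approximation for larger $k_s$. That case is unrepresentative once the RIS path carries appreciable power: the reflected factor alone has $\kappa_r\approx 3.6N$, about $700$ for $N=200$ and $m_1=m_2=2$. Bounding $Q$ and $Q'$ separately, between a $k_s=1$ expression and an Alzer-type bound, cannot give monotonicity of the ratio $R_0$. You would need a shape property, e.g.\ in the noise-free stable-only case, convexity of the log-odds of $P_o$ in $\log\lambda_B$. Without it, this lever is supported only by the same simulation evidence the paper uses.

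The genuine gap is the RIS lever in Step~4. In the closed form, the $\lambda_R N^2$ term appears only in the factor $\exp(-\lambda_B G)$ shared by both transforms; the mobility factor in \eqref{eq:l2} contains no $N$. In your own $k_s=1$ proxy, set $u=T\sigma^2/(P\eta)+\lambda_B G(T/\eta)$ and $\Delta=2\pi^2\lambda_{U_{near}}\csc(2\pi/\alpha)(TC/\eta)^{2/\alpha}/\alpha$. Then $R_0=(e^{u+\Delta}-1)/(e^{u}-1)$, and $\partial R_0/\partial u=e^{u}(1-e^{\Delta})/(e^{u}-1)^2<0$. So shrinking the RIS part of $G$ at fixed $\Delta$ (e.g.\ lowering $\lambda_R$, which enters nothing else) \emph{raises} $R_0$, the opposite of your claim.

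For a concentrated signal (large $k_s$) the sign can flip, because a common interference shift moves the operating point onto the steep part of the CDF. The direction is therefore regime-dependent and cannot be read off the closed form. To support ``optimize RIS reflection configuration'' you need either the numerical trends of Figs.~\ref{lowI_B_N} and~\ref{highI_B_N}, or a model in which $N$ enters the mobility-induced term (as it does in the intermediate step of \eqref{equa:LP2} before being dropped).

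Your threshold framing is also off. Since $I_k'$ is $I_k$ plus an independent nonnegative term, $P_o'\ge P_o$ and hence $R_0\ge 1$ for every positive $\lambda_B,\lambda_U$; in your proxy, $R_0>1$ iff $\Delta>0$. So ``high densities push $R_0$ above $1$'' is vacuous, and ``$R_0\le 1$ by limiting $\lambda_B$'' is unattainable. What you can actually establish is that $R_0-1$ grows with the densities.
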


The impact of different signal frequencies and BS densities on the propagation strength of interference in low-interference ($\lambda_U=0.01$) scenarios is shown in Fig.~\ref{lowI_B_freq}. The results indicate that as the signal frequency increases, the intensity of interference propagation decreases. High-frequency signals (such as those in the millimeter-wave band) have shorter wavelengths and stronger beam-focusing in RIS, resulting in more concentrated signals, reduced interference areas, and weaker interference propagation. However, an increase in BS density still leads to increased interference; however, an increase in frequency, to some extent, suppresses the propagation of interference. This conclusion has significant implications for the design of millimeter-wave communication systems.

\begin{observation}\label{O4}
High-frequency communications, e.g., in millimeter-wave and sub-THz bands, offer advantages for mitigating RIS-induced interference propagation. Due to their increased PL and directional narrow-beam transmissions/reflections, higher frequencies reduce the interference exposure region and effectively lower the infection radius in the SIS model. Thus, using higher-frequency bands can not only provide greater bandwidth but also slow the propagation of interference caused by RISs, thereby improving the overall stability of the communication system.
\end{observation}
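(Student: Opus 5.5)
The plan is to read Observation~\ref{O4} as a monotonicity claim: the interference propagation intensity $R_0$ is non-increasing in the carrier frequency $f$, all other parameters held fixed. I would prove it using only the closed forms of Sec.~\ref{inter_pro_analysis}. Frequency enters the model only through the path-loss constant
\[
C=\left(\frac{c_0\sqrt{G_tG_r}}{4\pi f}\right)^2\propto f^{-2},
\]
where $c_0$ denotes the speed of light, written this way to avoid a clash with the density symbol $\lambda$. So it suffices to show that $R_0$ is non-decreasing in $C$.

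\textbf{Step 1 (rewrite $R_0$).} Write $p_c=1-P_o$ and $p_c'=1-P_o'$. The expression for $R_0$ then reads
\[
R_0=\frac{1/p_c'-1}{1/p_c-1}=\frac{P_o'/(1-P_o')}{P_o/(1-P_o)},
\]
which is an odds ratio.

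\textbf{Step 2 (track $C$ in \eqref{Op1} and \eqref{Op2}).} From \eqref{con:s0}, the scale $\eta_s$ is proportional to the path loss, so $\eta_s\propto C$. Substituting $s\mapsto sT/\eta_s$ into \eqref{eq:l1} and \eqref{eq:l2} gives the following.
\begin{itemize}
\item In the direct-BS factor, $(sC)^{2/\alpha}$ becomes $(sTC/\eta_s)^{2/\alpha}$. This is independent of $C$.
\item The mobility factor containing $\lambda_{U_{near}}$ is likewise independent of $C$.
\item The RIS-reflected factor scales as $C^2/\eta_s\propto C$. I will argue it is negligible relative to the other terms for the parameters of Table~\ref{tab:t2}.
\item The noise factor $\exp(-sT\sigma^2/(P\eta_s))$ carries an effective noise level $\sigma^2/C\propto f^2$.
\end{itemize}
Hence, to leading order, raising $f$ is equivalent to raising the noise level $\nu=T\sigma^2/(P\eta_s)$. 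The interference Laplace transforms stay fixed: $\mathcal{L}_{I_k}$ without mobility and $\mathcal{L}_{I_k'}=\mathcal{L}_{I_k}\cdot\mathcal{L}_{U}$ with mobility, where $\mathcal{L}_U$ is the extra mobility factor.

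\textbf{Step 3 (monotonicity in $\nu$, case $k_s=1$).} Here $p_c=e^{-\nu}A$ and $p_c'=e^{-\nu}AB$, with constants $A=\mathcal{L}_{I_k}(T/\eta_s)$ and $B=\mathcal{L}_U\le 1$. Then
\[
R_0=\frac{(e^{\nu}-AB)/(AB)}{(e^{\nu}-A)/A}=\frac{1}{B}\cdot\frac{e^\nu-AB}{e^\nu-A}.
\]
Differentiating in $\nu$ shows that $R_0$ decreases monotonically from its value at small $\nu$ toward the limit $1/B$ as $\nu$ grows. The intuition is that once noise dominates, both outages approach $1$ and the additional mobility-induced interference matters relatively less. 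This is exactly the claimed reduction of the infection exposure.

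\textbf{Step 4 (general integer $k_s$).} This is the main obstacle. The finite sum of $s$-derivatives in \eqref{Op1} no longer factors. I would first rewrite the coverage as $p_c=\mathbb{E}\bigl[Q(k_s,(\nu+\tilde I)/\eta)\bigr]$, where $Q$ is the regularized upper gamma function, which is log-concave in its argument. I would then try to show that the odds ratio is decreasing in $\nu$ by comparing the logarithmic derivatives $\partial_\nu\log p_c$ and $\partial_\nu\log p_c'$, using the stochastic ordering $I_k'\ge I_k$. If this general argument fails, I would fall back on the single-parameter check (the $k_s=1$ computation of Step~3) together with the numerical curves in Figs.~\ref{lowI_B_freq} and~\ref{highI_B_freq}. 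I would also argue separately that the neglected RIS term only strengthens the trend, because it is additional interference whose weight shrinks as $C$ decreases.
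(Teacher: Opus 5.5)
Your Step~3 computation is correct. However, the proposal does not establish the statement, and the step you flag as the main obstacle cannot be completed, because its target is false. Since $\log R_0=\operatorname{logit}p_c-\operatorname{logit}p_c'$, what has to shrink with $\nu$ is a logit gap. Log-concavity of $Q(k_s,\cdot)$ works against you here. The quantity $-\partial_t\log Q(k_s,t)$ is the hazard rate of a $\mathrm{Gamma}(k_s,1)$ variable, which is strictly increasing for $k_s\ge 2$, so extra interference raises $|\partial_\nu\log p|$ instead of lowering it. Quantitatively, write $p_c=\mathbb{E}[Q(k_s,\nu+X)]$ and $p_c'=\mathbb{E}[Q(k_s,\nu+X+U)]$, with $X,U$ the scaled interferences and $B=\mathbb{E}[e^{-U}]$. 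The expansion $Q(k_s,t)\sim t^{k_s-1}e^{-t}/\Gamma(k_s)$ then gives $R_0\approx\frac{1}{B}\bigl(1-(k_s-1)\frac{\mathbb{E}[Ue^{-U}]}{B\nu}\bigr)$ for large $\nu$. So for every integer $k_s\ge2$, $R_0$ eventually \emph{increases} to $1/B$ in $\nu$, i.e., in $f$, which is the reverse of the $k_s=1$ behaviour. A toy check with $k_s=2$, $X=0$, $U\equiv1$ gives $R_0\approx 4.074,\,2.749,\,2.467,\,2.393,\,2.386,\,2.403$ at $\nu=1,\dots,6$, with $R_0\to e$: the function is non-monotone. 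You are therefore left with $k_s=1$ only. With $\kappa_r\approx 712$ for $N=200$ and $m_1=m_2=2$, that is essentially the regime in which the reflected path contributes little to $S_0$.

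Your treatment of the RIS-reflected term also has the wrong sign. In the $k_s=1$ formula, $R_0=\frac{1}{B}\bigl(1+\frac{1-B}{e^{\nu}/A-1}\bigr)$ is increasing in $A$. After $s\mapsto T/\eta_s$, the RIS factor of \eqref{eq:l1} is $e^{-KC}$ for a constant $K>0$, and it is common to $p_c$ and $p_c'$. As $f$ grows this factor rises toward $1$, which increases $A$ and hence $R_0$. It therefore opposes the noise effect rather than reinforcing it: $e^{\nu}/A$ increases with $f$ only when $\nu>KC$ at the operating point. That is a condition you must verify, not a correction that ``only strengthens the trend''.

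Note also what your mechanism actually is. In the closed forms, frequency enters only through $C$; there is no beamwidth, and $r_I$ (hence $B$) does not depend on frequency. Wherever your argument works, $R_0$ falls because the link becomes noise-limited and both $P_o$ and $P_o'$ rise toward $1$. That supports neither the ``smaller exposure region/infection radius'' explanation nor the ``improved stability'' conclusion of the Observation.

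The paper itself does not derive the Observation. It reads the trend off the Monte Carlo curves in Figs.~\ref{lowI_B_freq} and~\ref{highI_B_freq} and adds a beam-focusing heuristic. Your analytic route can at best certify a regime-restricted statement about $R_0$ ($k_s=1$ and $\nu>KC$). Beyond that, you are back to the numerical evidence, as the paper is.
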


Figure~\ref{highI_B_freq} includes interference propagation intensity results versus the operating signal frequency in high interference ($\lambda_U=0.1$) scenarios. Evidently, compared with Fig.~\ref{lowI_B_freq}, the interference propagation phenomenon is more significant. This is because more UEs lead to more sources of interference. At the same time, in scenes such as cinemas and sports stadiums, UEs are arranged more neatly, and there are more interfering UEs at the same angle as the target UEs, which makes the interference propagation effect more significant. Similar to Fig.~\ref{lowI_B_freq}, as the signal frequency increases, the interference propagation intensity exhibits a decreasing trend, and an increase in BS density still leads to interference boosting. 
\begin{figure}[t]
    \centering
\includegraphics[width=1.09\linewidth]{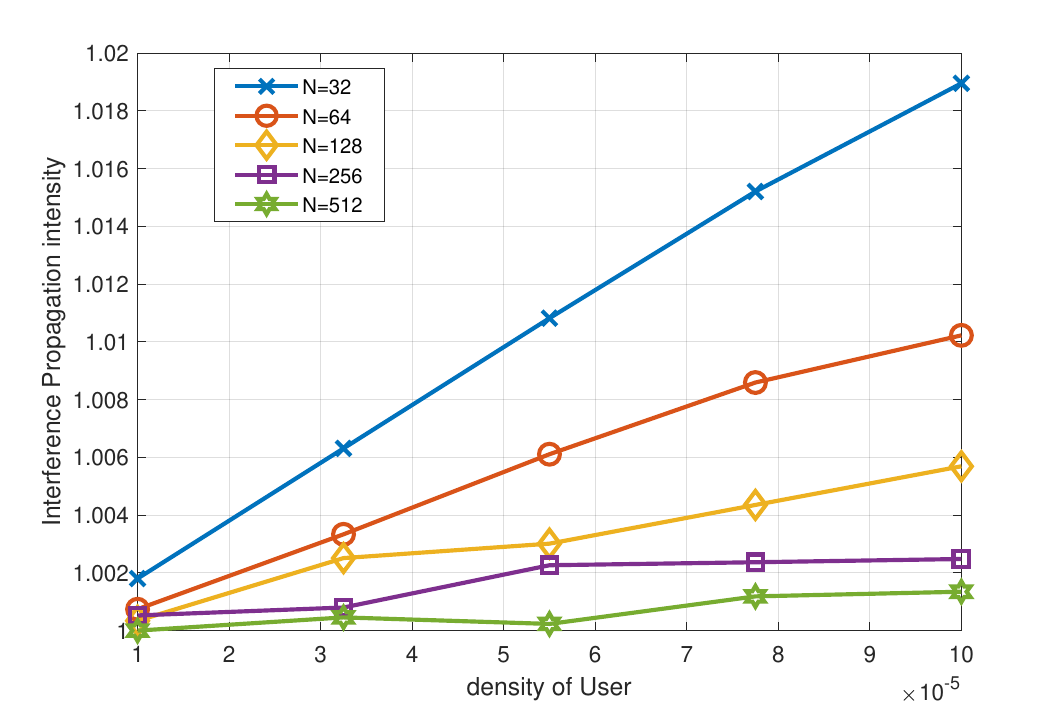}
    \caption{\small Interference propagation intensity versus the density of UEs for different numbers of elements per RIS in low interference scenarios.}
    \label{lowI_B_N}
\end{figure}

The interplay between BS density and the number of RIS elements on interference propagation intensity in scenarios with low user density ($\lambda_U=0.01$) is illustrated in Fig.~\ref{lowI_B_N}. As observed, with increasing RIS element quantity, the interference propagation intensity shows a slight upward trend. However, the overall value remains close to $1$, with a maximum deviation of less than $2\%$. This indicates that, under this configuration, interference propagation can be almost negligible. Specifically, although increasing the density of BS theoretically leads to an increase in sources of interference, the sparse distribution and low mobility of mobile UEs significantly reduce the probability that interference signals will be reflected to the target UEs. Additionally, the limited number of potential infected users and the lack of user movement paths hinder the establishment of an effective interference propagation path. To summarize: 

\begin{observation}\label{O5}
In low mobility and low UE-density scenarios, such as industrial automation areas or sparsely distributed vehicle networks, even with dense infrastructure configurations, RIS deployment will not cause significant interference propagation. This demonstrates that: RISs can be massively deployed to enhance coverage \cite{RIScoverage} without incurring systemic interference, while highlighting the role of UE mobility in managing interference propagation.
\end{observation}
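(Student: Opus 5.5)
The plan is to read Observation~\ref{O5} as a quantitative claim about the interference propagation intensity $R_0$: when the effective density of nearby moving interferers $\lambda_{U_{near}}=\lambda_B\lambda_U\pi r_I^2$ is small, $R_0$ stays within a small neighbourhood of $1$. This should hold uniformly in the infrastructure parameters $\lambda_B$, $\lambda_R$ and $N$ (the neighbourhood is allowed to grow with $\lambda_B$, since $\lambda_B$ multiplies $\lambda_U$ in $\lambda_{U_{near}}$). Let $A=1-P_o$ and $A'=1-P_o'$ denote the coverage probabilities given by (\ref{Op1}) and (\ref{Op2}). Then $R_0=(1/A'-1)/(1/A-1)=P_o'A/(P_oA')$. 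The whole argument therefore reduces to showing that $A'\to A$ as $\lambda_{U_{near}}\to 0$, with an explicit first-order rate.

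\textbf{Step 1: factorize the Laplace transform.} Comparing (\ref{eq:l1}) and (\ref{eq:l2}) term by term gives
\begin{equation*}
\mathcal{L}_{I_k'}(s)=\mathcal{L}_{I_k}(s)\,E(s),\qquad E(s)=\exp\!\left(-\tfrac{2\pi^2}{\alpha}\lambda_{U_{near}}\csc\!\left(\tfrac{2\pi}{\alpha}\right)(sC)^{2/\alpha}\right).
\end{equation*}
All dependence on $\lambda_R$ and $N$ sits in $\mathcal{L}_{I_k}$, and so it is shared by $A$ and $A'$. This already explains why increasing $N$, or making the infrastructure denser, only moves $R_0$ through the extra factor $E$.

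\textbf{Step 2: expand to first order.} Write $E=1-\lambda_{U_{near}}G(s)+O(\lambda_{U_{near}}^2)$, where $G(s)=\tfrac{2\pi^2}{\alpha}\csc(2\pi/\alpha)(sC)^{2/\alpha}$. Substituting into (\ref{Op2}) and applying the Leibniz rule to the $x$-th derivatives of $\exp(-sT\sigma^2/(P\eta))\mathcal{L}_{I_k}(sT/\eta)E(sT/\eta)$ gives
\begin{equation*}
A'=A-\lambda_{U_{near}}D+O(\lambda_{U_{near}}^2),
\end{equation*}
where $D\ge 0$ is a finite sum of $k_s$ derivative terms. $D\ge0$ because extra interference can only decrease coverage. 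Consequently
\begin{equation*}
R_0=1+\frac{\lambda_{U_{near}}D}{A\,P_o}+O(\lambda_{U_{near}}^2),
\end{equation*}
so $R_0\ge 1$ and $R_0-1$ is linear in $\lambda_B\lambda_U r_I^2$.

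\textbf{Step 3: interpret the regime.} Low UE density means $\lambda_U$ is small. Low mobility, under the random-walk model, means few UEs enter the radius $r_I$ per unit time, which we treat as a small effective $r_I$. With the default parameters of Table~\ref{tab:t2} ($\lambda_B=10^{-5}$, $\lambda_U=10^{-2}$, $r_I=10$), the product $\lambda_{U_{near}}$ is of order $10^{-4}$. The correction $R_0-1$ is therefore of order a few percent at most, matching the sub-$2\%$ deviation seen in Fig.~\ref{lowI_B_N}.

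\textbf{Main obstacle.} The hard step is bounding $D/(AP_o)$ uniformly in $N$ and $\lambda_B$. Three issues arise:
\begin{itemize}
\item[(i)] The shape $k_s$ grows with $N$, so the number of derivative terms grows.
\item[(ii)] The Leibniz expansion mixes derivatives of $\mathcal{L}_{I_k}$ with derivatives of $(sC)^{2/\alpha}$, which are singular at $s=0$.
\item[(iii)] $P_o$ can itself be tiny when $N$ is large, which inflates the ratio.
\end{itemize}
My first attempt would avoid the Leibniz sum entirely. I would write $A-A'=\Pr[S_0\ge T(I+\sigma^2)/P]-\Pr[S_0\ge T(I+I_U+\sigma^2)/P]$, where $I_U$ is the extra interference from nearby moving UEs. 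The gamma tail of $S_0$ is Lipschitz, which gives
\begin{equation*}
A-A'\le \frac{T}{P}\,\sup_x f_{S_0}(x)\,\mathbb{E}[I_U],
\end{equation*}
and Campbell's theorem shows $\mathbb{E}[I_U]$ is proportional to $\lambda_{U_{near}}$. If issue (iii) still bites for very large $N$, I would restrict the claim to the parameter range actually plotted, where $P_o$ is bounded away from zero.
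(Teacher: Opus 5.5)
The paper offers no analytic derivation of Observation~\ref{O5}. It is read off Fig.~\ref{lowI_B_N}, where at $\lambda_U=10^{-2}$ the value of $R_0$ stays within $2\%$ of $1$ over the plotted $N$ and $\lambda_B$, and it is supported only by a qualitative argument about sparse users and few movement paths. An analytic route is therefore genuinely different. Your Steps~1--2 are sound as algebra on the closed forms: $\mathcal{L}_{I_k'}=\mathcal{L}_{I_k}E$ and $R_0=P_o'A/(P_oA')$ hold, and monotonicity in the interference even gives the exact bound $R_0\ge 1$.

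The argument breaks at Step~3, where the smallness is supplied. $\lambda_{U_{near}}$ is a density ($\pi\cdot 10^{-5}\,\mathrm{m}^{-2}$ here) that multiplies $(sC)^{2/\alpha}$, so its numerical size means nothing on its own. Comparing with \eqref{eq:l1} gives
\[
E(s)=\Bigl[\exp\Bigl(-\tfrac{2\pi^2}{\alpha}\lambda_B\csc\bigl(\tfrac{2\pi}{\alpha}\bigr)(sC)^{2/\alpha}\Bigr)\Bigr]^{\lambda_U\pi r_I^2},
\]
so $I_U$ has the law of $(\lambda_U\pi r_I^2)^{\alpha/2}$ times the direct-BS interference. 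The dimensionless number $\lambda_U\pi r_I^2$, the mean number of UEs in the interference disk, is what measures the perturbation; the $\lambda_B$ inside $\lambda_{U_{near}}$ only reproduces the scale of the baseline interference. At your values ($\lambda_U=10^{-2}$, $r_I=10$~m) this number equals $\pi$, so the moving-UE term is stochastically larger than the interference it is supposed to perturb. Whether a first-order expansion is then accurate, and how large $D/(AP_o)$ is, depends on how both terms compare with the signal and noise scales $\eta_s$ and $\sigma^2/P$. Step~3 never examines this, so the ``few percent'' is matched to the figure rather than derived.

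Neither of your remedies for the main obstacle closes this gap. The Lipschitz/Campbell bound is vacuous in this model. $E(s)=\exp(-c\,s^{2/\alpha})$ with $2/\alpha<1$ is the Laplace transform of a one-sided $(2/\alpha)$-stable variable, because it comes from integrating the singular path loss $C\nu^{-\alpha}$ down to $\nu=0$. Hence $\mathbb{E}[I_U]=-E'(0^+)=\infty$, and a finite bound needs a guard radius or a bounded path loss, i.e., a different model.

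Issue~(iii) is not a technicality either. When outage is driven by the heavy tail of the stable interference (strong desired signal, e.g.\ large $N$), $\Pr[I>x]\sim a\,x^{-2/\alpha}/\Gamma(1-2/\alpha)$, where $a$ is the coefficient of $s^{2/\alpha}$ in $-\ln\mathcal{L}_I(s)$. The Laplace-transform factors in \eqref{eq:l2} then give $P_o'/P_o\to 1+\lambda_U\pi r_I^2$, hence $R_0\to 1+\lambda_U\pi r_I^2\approx 4.1$ at your values, not $1$. The $s$-independent factor in \eqref{eq:l1} must be set aside here: it makes $\mathcal{L}_{I_k}(0)\neq 1$, creating an outage floor that would pull $R_0$ toward $1$ for a spurious reason.

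Also, nothing in \eqref{eq:l2} depends on the random-walk step. Identifying ``low mobility'' with ``small $r_I$'' is therefore an added assumption, not something the formulas give you.

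What your method can actually deliver is $R_0\ge 1$ always, and $R_0-1=O(\lambda_U\pi r_I^2)$ in a fixed non-degenerate regime. It does not show that $R_0$ stays near $1$ uniformly in $N$ and $\lambda_B$ at the plotted density. Restricting to ``the range actually plotted'' rests the argument on the same numerical evidence the paper uses.
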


\begin{figure}[t]
    \centering
\includegraphics[width=1.09\linewidth]{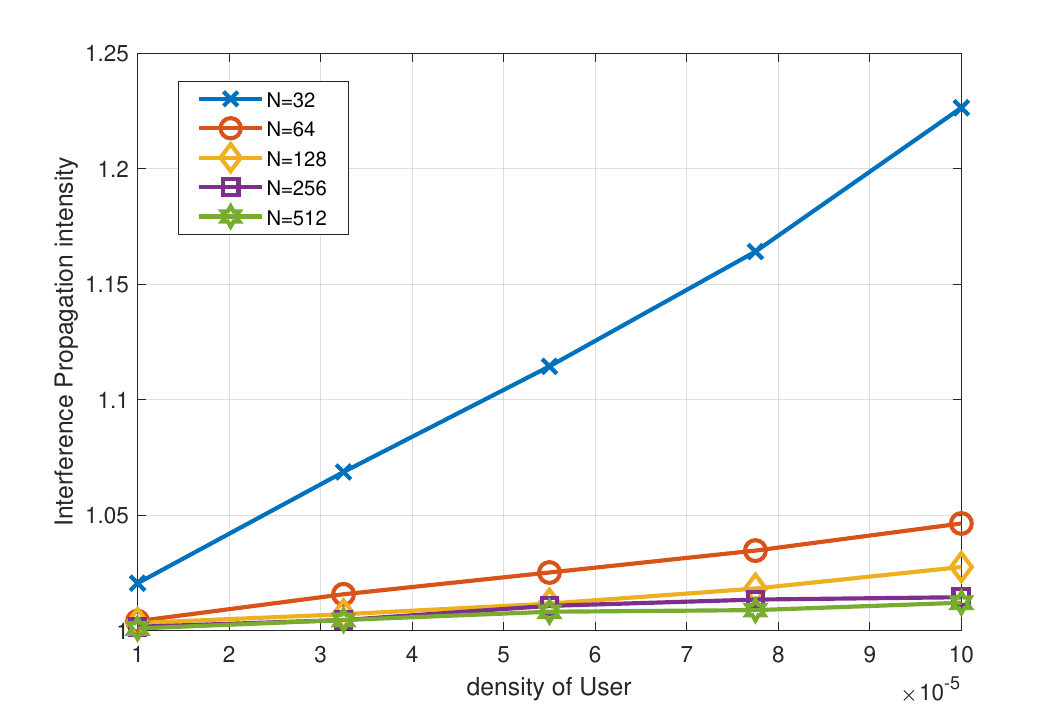}
    \caption{\small Interference propagation intensity versus the density of UEs for different numbers of elements per RIS in high interference scenarios.}
    \label{highI_B_N}
\end{figure}

\begin{figure*}[t]
    \centering
    \begin{align}\label{equa:LP}
        &\mathcal{L}_{I_k}(s)=\mathbb{E}\left\{e^{-sI_k}\right\}=\mathbb{E}_{B,h}\left\{\exp\left(-s\sum_{i\in B\backslash{0}}\left|\sqrt{PL_{ik}}h_{ik}\right|^2\right)\right\}
        \mathbb{E}_{B,R,\phi,h}\left\{\exp\left(-s\sum_{i\in B\backslash \{0\}}\sum_{j\in R}\left|h_{ij}\phi_jh_{jk}\right|^2\right)\right\}\nonumber\\
        &\overset{(a)}{=}\mathbb{E}_{B}\left\{\prod_{i\in {B}\backslash\{0\}} \frac{1}{1+scd_{ik}^{-\alpha }}\right\}
         \mathbb{E}_{B}\prod_{i \in B\backslash \{0\}}\mathbb{E}_{R}\left\{\prod_{j\in R}\frac{1}{1+sNc^2d_{ij}^{-\alpha }d_{jk}^{-\alpha }}\right\}\nonumber\\
        &\overset{(b)}{=}\exp\left(-2\pi\lambda_B\int_0^\infty\left(1-\frac1{1+sc\nu^{-\alpha}}\right)\nu d\nu\right)
        \exp\left(-2\pi\lambda_B\int_0^\infty\left(1-\mathbb{E}_{R}\left\{\prod_{j\in R}\frac1{sNc^2\nu^{-\alpha}d_{jk}^{-\alpha}+1}\right\}\right)\nu d\nu\right)\\
        &\overset{(c)}{=}\exp\left(-2\pi\lambda_B\int_0^\infty\left(1-\frac1{1+sc\nu^{-\alpha}}\right)\nu d\nu\right)
        \exp\left(-2\pi\lambda_B\int_0^\infty\left(1-\left(-2\pi \lambda _R\int_0^\infty\left(1-\frac{1}{sNc^2\nu^{-\alpha}u^{-\alpha }+1}udu\right)\right)\right)\nu d\nu\right)\nonumber\\
        &\overset{(d)}{=}\exp\left(\frac{ -2\pi^2 \lambda_B{\rm csc}\left(\frac{2\pi}{\alpha }\right) (sc)^{2/\alpha}}{\alpha}\right)
\exp\left(-2\pi\lambda_B\left(\frac{2\pi^2\lambda_R{\rm csc}(2\pi/\alpha )sN^2c^2 }{\alpha^2 }(\ln_{}{b}-\ln_{}{a})
        +\frac{\alpha }{\alpha -1}\left(b^{(1-1/\alpha)}-a^{(1-1/\alpha )}\right)\right)\right)\nonumber
    \end{align}
\hrulefill
\end{figure*}

Finally, Fig.~\ref{highI_B_N} illustrates the trend of the influence of BS density and the number of RIS elements on interference propagation intensity in high-interference scenarios ($\lambda_U=0.1$). It can be seen that, as the RISs become larger, the overall interference propagation intensity initially increases and then stabilizes. A larger RIS enhances the signal's reflection and beamforming capabilities, but it also amplifies interference signals through multiple reflections in dense UE deployments, thereby expanding the interference propagation range. Especially in high-density BS-UE scenarios, excessive RISs may exacerbate interference propagation and reduce overall performance. 
\begin{observation}\label{O6}
While reducing the number of RIS elements reduces energy consumption, it also increases the risk of interference in highly dense, high-mobility environments. Therefore, the optimal RIS size must balance the required signal enhancement, energy consumption, and system interference, especially in dynamic system topologies.
\end{observation}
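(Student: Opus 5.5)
The plan is to read Observation~\ref{O6} as two claims about the closed-form metrics derived above. The first is monotonicity: in the dense, high-mobility regime, $R_0$ (equivalently, the infection-to-recovery ratio $\beta/\mu$) increases as $N$ decreases. The second is that a power-consumption term increases with $N$, so any sensible design objective has an interior optimum. I will prove the first claim from the structure of (\ref{Op1}), (\ref{Op2}), (\ref{eq:l1}) and (\ref{eq:l2}). The second then follows by a simple trade-off argument.

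\textbf{Step 1 (how $N$ enters the desired signal).} From (\ref{con:s0}), $\mathbb{E}\{S_0\}$ contains the term $N(N-1)(\cdot)^2/(m_1m_2)$ and a cross term linear in $N$. The desired power therefore grows like $N^2$, while ${\rm Var}\{S_0\}/\mathbb{E}\{S_0\}^2$ decreases in $N$. Hence the shape $k_s$ grows and the scale $\eta_s$ grows. I will show that the coverage term $1-P_o=\sum_{x<k_s}\frac{(-1)^x}{x!}\partial_s^x[\cdots]\mathcal{L}_{I_k}(sT/\eta_s)$ is increasing in $\eta_s$. To do this I will compare with the $k_s=1$ case, where it reduces to $\exp(-T\sigma^2/(P\eta_s))\mathcal{L}_{I_k}(T/\eta_s)$. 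That expression is clearly increasing in $\eta_s$, because every Laplace transform is decreasing in its argument. The same argument applies with $\mathcal{L}_{I_k'}$ in place of $\mathcal{L}_{I_k}$.

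\textbf{Step 2 (the mobility term does not scale with $N$).} Factor $\mathcal{L}_{I_k'}(s)=\mathcal{L}_{I_k}(s)\,\exp\!\big(-2\pi^2\lambda_{U_{near}}\csc(2\pi/\alpha)(sC)^{2/\alpha}/\alpha\big)$. The extra factor depends on $N$ only through its evaluation point $s=T/\eta_s$. As $N$ shrinks, $\eta_s$ decreases, so this point moves right, and the extra factor falls sharply when $\lambda_{U_{near}}=\lambda_B\lambda_U\pi r_I^2$ is large, which is the dense regime. Consequently $1-P_o'$ falls faster than $1-P_o$. Writing $R_0=(1/(1-P_o')-1)/(1/(1-P_o)-1)=\frac{P_o'}{1-P_o'}\cdot\frac{1-P_o}{P_o}$, I will show that $\partial R_0/\partial N<0$ whenever $\lambda_{U_{near}}$ exceeds a threshold. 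The tool is the log-derivative of the extra factor with respect to $\eta_s$: it is proportional to $\lambda_{U_{near}}\eta_s^{-2/\alpha-1}$, and I will check that it dominates the corresponding derivative of the $\mathcal{L}_{I_k}$ factor.

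\textbf{Step 3 (trade-off).} I will model RIS energy as $E(N)=E_0+N\,p_e$, which is increasing. Smaller $N$ saves energy but, by Step 2, raises $R_0$ in the dense regime. For any objective of the form $w_1E(N)+w_2R_0(N)$, or for minimizing $E(N)$ subject to $R_0(N)\le 1$ (the subcritical condition of Observation~\ref{O1}), the optimum is the smallest $N$ achieving the required signal level and $R_0\le1$, which is strictly interior. This gives the ``balance'' statement.

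\textbf{Main obstacle.} The hard part is the monotonicity in Step 1 for non-integer $k_s$, where the derivative sum is only formally defined. I would first try to prove it for integer $k_s$, using the fact that the sum equals $\Pr[\mathrm{Gamma}(k_s,\eta_s)>T(I+\sigma^2/P)]$, which is increasing in $\eta_s$ by stochastic ordering. I would then extend to general $k_s$ by interpolation, or fall back on the Monte Carlo evidence in Fig.~\ref{highI_B_N}.
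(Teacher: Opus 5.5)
The paper does not prove this observation analytically. It reads it off Fig.~\ref{highI_B_N}, whose accompanying text reports that $R_0$ \emph{increases} with RIS size and then saturates. This matches the upward trend in Fig.~\ref{lowI_B_N} and the remark after the $R_0$ formula that more RIS elements strengthen propagation under mobility. The observation's first sentence does invite your reading, but the trade-off the paper actually supports is different. More elements buy signal enhancement at the cost of energy and of (saturating) propagation intensity. The risk of too few elements is lost desired-signal margin, i.e., larger $P_o$ and $P_o'$. Your Step~2 targets $\partial R_0/\partial N<0$ in the dense regime. That is the opposite of the trend in the very figure you list as a fallback, so the fallback cannot close the step.

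Analytically, Step~2 also has a concrete gap. With $c=1-P_o$ and $c'=1-P_o'=cM$,
\[
R_0=\frac{1-cM}{M(1-c)},\qquad \frac{\partial R_0}{\partial c}=\frac{1-M}{M(1-c)^2}>0,\qquad \frac{\partial R_0}{\partial M}=-\frac{1}{M^2(1-c)}<0 .
\]
Increasing $N$ raises $M$, which is the effect you use. By your own Step~1 it also raises $c$, and the two move $R_0$ in opposite directions. In fact $dR_0/dN<0$ holds iff $\frac{d\ln M}{dN}>\frac{P_o'-P_o}{P_o}\,\frac{d\ln(1-P_o)}{dN}$.

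Your planned comparison of $\partial_{\eta_s}\ln M$ with $\partial_{\eta_s}\ln\mathcal{L}_{I_k}$ has two problems:
\begin{itemize}
\item It omits the weight $(P_o'-P_o)/P_o$, which is large precisely when pre-movement outage is small.
\item It lets $c$ depend on $N$ only through $\eta_s$. But (\ref{eq:l1}) carries an explicit $sN^2C^2$ term, and $k_s$, hence the number of terms in (\ref{Op1}), also grows with $N$.
\end{itemize}
The statement ``$1-P_o'$ falls faster than $1-P_o$'' is a statement about $M$, not about the odds ratio $R_0$. For the same reason, Step~1's ``coverage rises with $N$'' does not follow from monotonicity in $\eta_s$. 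The signal side is routine, since $\Gamma(k_s,\eta_s)$ is stochastically increasing in both parameters given $I_k$. So your stated main obstacle is not the real one.

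Step~3 fails as posed. $\mathcal{L}_{I_k'}$ is $\mathcal{L}_{I_k}$ times the transform of an independent nonnegative term, so $I_k'$ stochastically dominates $I_k$ and $P_o'\ge P_o$. Hence $R_0\ge 1$ for every $N$, strictly once $\lambda_{U_{near}}>0$. The constraint $R_0(N)\le 1$ is therefore infeasible. The weighted-sum version also needs its own argument to place the minimizer strictly in the interior.
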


\section{Conclusions}\label{conculsion}
In this paper, we studied interference propagation in multi-RIS-empowered communication systems. We derived novel closed-form expressions for the system's outage probability and the infectious and recovery ratios within the context of the interference propagation model presented. Our findings suggest that while RISs offer significant benefits in terms of coverage and energy efficiency, their deployment can unintentionally lead to interference propagation. We recommend that future RIS deployment strategies explicitly consider mobility patterns, the number of reflecting elements, and user density to fully harness RIS potential. In future work, we will explore more realistic scenarios, including a broader range of SG models. Additionally, we will assess how different categories of RISs and obstacles affect the propagation of interference.


\appendices
\section{}
The Laplace transform of the interference power $I_k$ before UE movement can be written as (\ref{equa:LP}) (top of the next page),
where (a) is due to the independent and identically distributed (i.i.d.) channel power gain $h_{ik}$, $h_{ij}$, $h_{jk}$; (b) and (c) follow from the probability generating functional (PGFL) \cite{PGFL} of PPP according to which the positions of the interferers are modeled by a homogeneous PPP of density; (d) follows the integral solution. The integral over the distance has a divergence solution, so we restrict the upper and lower limits of the integral to $a$, $b$.

\section{}
The Laplace transform of the interference power $I_k$ after UE movement can be expressed as follows:
\begin{align}
    &\mathcal{L}_{I_k'}(s)=\mathbb{E}\left\{e^{-sI_k}\right\}\nonumber\\
    &=\mathbb{E}_{B,h}\left\{\exp\left(-s\sum_{i\in B\backslash{0}}\left|\sqrt{PL_{ik}}h_{ik}\right|^2\right)\right\}\nonumber\\
    &\times\mathbb{E}_{B,R,\phi,h}\left\{\exp\left(-s\sum_{i\in B\backslash \{0\}}\sum_{j\in R}\left|h_{ij}\phi_jh_{jk}\right|^2\right)\right\}\nonumber\\
    &\times \mathbb{E}_{B,R,\phi,h}\left\{\exp\left(-s\sum_{k_I}
    \left|\sqrt{PL_{ijk}}h_{ij}\Phi_j h_{jk}\right|^2\right)\right\}.\label{equa:LP1}
\end{align}
In Appendix A, we have proved the first two terms of $\mathcal{L}_{I_k'}(s)$, and next we will analyze its last term. To this end:
\begin{align}
    & \mathbb{E}_{B,R,\phi,h}\left\{\exp\left(-s\sum_{k_I}
    \left|\sqrt{PL_{ijk}}h_{ij}\Phi_j h_{jk}\right|^2\right)\right\}=\nonumber\\
    &
    =\mathbb{E}_{B}\left\{\prod_{k_I}\frac{1}{1+sNC^2d_{ij}^{-\alpha }d_{jk}^{-\alpha }}\right\}\nonumber\\
    &
    =\exp\left(-2\pi\lambda_{near}\int_0^\infty\left(1-\frac1{1+sNC^2\nu^{-\alpha}d_{jk}^{-\alpha}}\right)\nu d\nu\right)\nonumber\\ 
    &
    =\exp\left(\frac{ -2\pi^2 \lambda_{U_{near}}{\rm csc}\left(\frac{2\pi}{\alpha}\right) (sC)^{2/\alpha}}{\alpha}\right),
\label{equa:LP2}
\end{align}
where similar steps with the derivation of (\ref{equa:LP}) have been used. 

\section{}
According to the definition of $P_o$ in (\ref{PoD}), we have that:
\begin{equation}
\begin{split}
    P_o&=\mathbb{E}\left\{\rm Pr\left[\mathrm{SINR}_0<T\right]\right\}\\
    &=\mathbb{E}\left\{\rm Pr\left[\frac{SP}{IP+\sigma^{2}} <T\right]\right\}\\
    &=\mathbb{E}\left\{\rm Pr\left[S<T\left({I+\frac{\sigma^{2}}{P} }\right)\right]\right\}.
\end{split}
\end{equation}

To solve the above probabilities, we approximate the distribution of $S$ by the Gamma distribution with parameters $k_s$ and $\eta_s$. 
Thus, the probability of $S<T({I+\frac{\sigma^{2}}{P} })$ is
\begin{equation}
\begin{split}
    P_o&=1-\mathbb{E}_{I_k}\left\{\frac{\Gamma\left(k_s,\frac{C(T)}{\eta}\right)}{\Gamma (k_s)}\right\}=1-\sum_{x=0}^{k_s-1}\frac{(-1)^x}{x!}\\
    &\times\mathbb{E}\left\{(-1)^x T\left({{I_k}+\frac{\sigma^{2}}{P} }\right)\exp\left(-T\left({{I_k}+\frac{\sigma^{2}}{P} }\right)\right)\right\}.
\end{split}
\end{equation}
Using the definition of $I_k$'s Laplace transform $\mathcal{L}_{I_k}(s)=\mathbb{E}\{e^{-s{I_k}}\}$, yields the expression:
\begin{equation*}
    P_o=1-\sum_{x  = 0}^{k_s-1}\frac{(-1)^x}{x!}\frac{\partial ^x}{\partial ^{s^x}}\mathbb{E}_{I_k}\left\{\exp\left(\frac{-sT\left({I_k+\frac{\sigma^{2}}{P} }\right)}{\eta}\right)\right\}\Bigg|_{s=1}.
\end{equation*}
This ends the proof.
%
%

\bibliographystyle{IEEEtran}
\bibliography{IEEEabrv,reference}

\begin{thebibliography}{10}
\providecommand{\url}[1]{#1}
\csname url@samestyle\endcsname
\providecommand{\newblock}{\relax}
\providecommand{\bibinfo}[2]{#2}
\providecommand{\BIBentrySTDinterwordspacing}{\spaceskip=0pt\relax}
\providecommand{\BIBentryALTinterwordstretchfactor}{4}
\providecommand{\BIBentryALTinterwordspacing}{\spaceskip=\fontdimen2\font plus
\BIBentryALTinterwordstretchfactor\fontdimen3\font minus \fontdimen4\font\relax}
\providecommand{\BIBforeignlanguage}[2]{{%
\expandafter\ifx\csname l@#1\endcsname\relax
\typeout{** WARNING: IEEEtran.bst: No hyphenation pattern has been}%
\typeout{** loaded for the language `#1'. Using the pattern for}%
\typeout{** the default language instead.}%
\else
\language=\csname l@#1\endcsname
\fi
#2}}
\providecommand{\BIBdecl}{\relax}
\BIBdecl

\bibitem{RIS}
Q.~Wu, B.~Zheng, C.~You, L.~Zhu, K.~Shen, X.~Shao, W.~Mei, B.~Di, H.~Zhang, E.~Basar, L.~Song, M.~Di~Renzo, Z.-Q. Luo, and R.~Zhang, ``Intelligent surfaces empowered wireless network: Recent advances and the road to {6G},'' \emph{Proc. IEEE}, vol. 112, no.~7, pp. 724--763, 2024.

\bibitem{hardware}
E.~Basar, G.~C. Alexandropoulos, Y.~Liu, Q.~Wu, S.~Jin, C.~Yuen, O.~A. Dobre, and R.~Schober, ``Reconfigurable intelligent surfaces for {6G}: Emerging hardware architectures, applications, and open challenges,'' \emph{IEEE Veh. Technol. Mag.}, vol.~19, no.~3, pp. 27--47, 2024.

\bibitem{RIS_challenges}
G.~C. Alexandropoulos, M.~Crozzoli, D.-T. Phan-Huy, K.~D. Katsanos, H.~Wymeersch, P.~Popovski, P.~Ratajczak, Y.~B{\'e}n{\'e}dic, M.-H. Hamon, S.~Herraiz~Gonzalez, R.~D'Errico, and E.~Calvanese~Strinati, ``{RIS}-enabled smart wireless environments: {D}eployment scenarios, network architecture, bandwidth and area of influence,'' \emph{EURASIP J. Wireless Commun. Netw.}, vol. 103, pp. 1--38, 2023.

\bibitem{SINMEC}
X.~Cao, B.~Yang, C.~Huang, C.~Yuen, Y.~Zhang, D.~Niyato, and Z.~Han, ``Converged reconfigurable intelligent surface and mobile edge computing for space information networks,'' \emph{IEEE Network}, vol.~35, no.~4, pp. 42--48, 2021.

\bibitem{RISISACSecure}
C.~Jiang, C.~Zhang, C.~Huang, J.~Ge, D.~Niyato, and C.~Yuen, ``Ris-assisted isac systems for robust secure transmission with imperfect sense estimation,'' \emph{IEEE Transactions on Wireless Communications}, vol.~24, no.~5, pp. 3979--3992, 2025.

\bibitem{HAPWN}
J.~Lyu and R.~Zhang, ``Hybrid active/passive wireless network aided by intelligent reflecting surface: System modeling and performance analysis,'' \emph{IEEE Trans. Wireless Commun.}, vol.~20, no.~11, pp. 7196--7212, 2021.

\bibitem{FGARIS}
L.~Yang, X.~Li, S.~Jin, M.~Matthaiou, and F.-C. Zheng, ``Fine-grained analysis of reconfigurable intelligent surface-assisted mmwave networks,'' in \emph{Proc. IEEE Veh. Technol. Conf}, 2022, pp. 1--6.

\bibitem{RISAMCN}
C.~Zhang, W.~Yi, Y.~Liu, K.~Yang, and Z.~Ding, ``Reconfigurable intelligent surfaces aided multi-cell noma networks: A stochastic geometry model,'' \emph{IEEE Trans. Commun.}, vol.~70, no.~2, pp. 951--966, 2022.

\bibitem{RISMIMOAnalysis}
Y.~Sun, C.-X. Wang, Y.~Xu, L.~Xin, J.~Huang, J.~Huang, Q.~Qin, X.~Gao, B.~Guo, T.~J. Cui, and Y.~Chen, ``Ris-assisted mimo channel measurements and characteristics analysis for 6g wireless communication systems,'' \emph{IEEE Transactions on Vehicular Technology}, vol.~74, no.~9, pp. 13\,335--13\,349, 2025.

\bibitem{SGALIS}
Y.~Zhu, G.~Zheng, and K.-K. Wong, ``Stochastic geometry analysis of large intelligent surface-assisted millimeter wave networks,'' \emph{IEEE J. Sel. Areas Commun.}, vol.~38, no.~8, pp. 1749--1762, 2020.

\bibitem{PARLW}
T.~Wang, G.~Chen, M.-A. Badiu, and J.~P. Coon, ``Performance analysis of {RIS}-assisted large-scale wireless networks using stochastic geometry,'' \emph{IEEE Trans. Wireless Commun.}, vol.~22, no.~11, pp. 7438--7451, 2023.

\bibitem{CAIMW}
A.~Y. Etcibaşı and E.~Aktaş, ``Coverage analysis of {IRS}-aided millimeter-wave networks: A practical approach,'' \emph{IEEE Trans. Wireless Commun.}, vol.~23, no.~4, pp. 3721--3734, 2024.

\bibitem{CARISMC}
L.~Chen, X.~Yuan, and Y.-J.~A. Zhang, ``Coverage analysis of {RIS}-assisted mmwave cellular networks with {3D} beamforming,'' \emph{IEEE Trans. Commun.}, vol.~72, no.~6, pp. 3618--3633, 2024.

\bibitem{RIS-interference}
Y.~Jia, C.~Ye, and Y.~Cui, ``Analysis and optimization of an intelligent reflecting surface-assisted system with interference,'' \emph{IEEE Trans. Wireless Commun.}, vol.~19, no.~12, pp. 8068--8082, 2020.

\bibitem{yb-tvt}
B.~Yang, X.~Cao, C.~Huang, C.~Yuen, L.~Qian, and M.~Di~Renzo, ``Intelligent spectrum learning for wireless networks with reconfigurable intelligent surfaces,'' \emph{IEEE Trans. Veh. Technol.}, vol.~70, no.~4, pp. 3920--3925, 2021.

\bibitem{11212816}
G.~Sun, F.~Baccelli, K.~Feng, L.~U. Garcia, and S.~Paris, ``A stochastic geometry framework for performance analysis of ris-assisted ofdm cellular networks,'' \emph{IEEE Transactions on Wireless Communications}, pp. 1--1, 2025.

\bibitem{10600711}
F.~Saggese, V.~Croisfelt, R.~Kotaba, K.~Stylianopoulos, G.~C. Alexandropoulos, and P.~Popovski, ``On the impact of control signaling in {RIS}-empowered wireless communications,'' \emph{IEEE Open J. Commun. Society}, vol.~5, pp. 4383--4399, 2024.

\bibitem{10670007}
K.~D. Katsanos, P.~Di~Lorenzo, and G.~C. Alexandropoulos, ``Multi-{RIS}-empowered multiple access: A distributed sum-rate maximization approach,'' \emph{IEEE J. Sel. Topics Signal Process.}, vol.~18, no.~7, pp. 1324--1338, 2024.

\bibitem{9693982}
X.~Cao, B.~Yang, C.~Huang, G.~C. Alexandropoulos, C.~Yuen, Z.~Han, H.~V. Poor, and L.~Hanzo, ``Massive access of static and mobile users via reconfigurable intelligent surfaces: Protocol design and performance analysis,'' \emph{IEEE J. Sel. Areas Commun.}, vol.~40, no.~4, pp. 1253--1269, 2022.

\bibitem{SG2}
H.~ElSawy, A.~Sultan-Salem, M.-S. Alouini, and M.~Z. Win, ``Modeling and analysis of cellular networks using stochastic geometry: A tutorial,'' \emph{IEEE Commun. Surveys \& Tuts.}, vol.~19, no.~1, pp. 167--203, 2016.

\bibitem{MHCPP}
H.~He, J.~Xue, T.~Ratnarajah, F.~A. Khan, and C.~B. Papadias, ``Modeling and analysis of cloud radio access networks using matérn hard-core point processes,'' \emph{IEEE Trans. Wireless Commun.}, vol.~15, no.~6, pp. 4074--4087, 2016.

\bibitem{PPP}
R.~W. Heath, M.~Kountouris, and T.~Bai, ``Modeling heterogeneous network interference using poisson point processes,'' \emph{IEEE Trans. Signal Process.}, vol.~61, no.~16, pp. 4114--4126, 2013.

\bibitem{epidemic}
M.~J. Keeling and K.~T. Eames, ``Networks and epidemic models,'' \emph{Journal Royal Society Inter.}, vol.~2, no.~4, pp. 295--307, 2005.

\bibitem{11078147}
G.~C. Alexandropoulos, B.~K. Jung, P.~Gavriilidis, S.~Matos, L.~H. Loeser, V.~Elesina, A.~Clemente, R.~D'Errico, L.~M. Pessoa, and T.~K\"{u}rner, ``Characterization of indoor reconfigurable intelligent surface-assisted channels at 304 {GHz}: Experimental measurements, challenges, and future directions,'' \emph{IEEE Veh. Technol. Mag.}, early access, 2025.

\bibitem{SG3}
M.~Haenggi, \emph{Stochastic Geometry for Wireless Networks}.\hskip 1em plus 0.5em minus 0.4em\relax Cambridge University Press, 2012.

\bibitem{productPL}
J.~Liu and H.~Zhang, ``Performance analysis of {IRS}-assisted networks with product-distance,'' \emph{IEEE Trans. Wireless Commun.}, 2024.

\bibitem{mobility}
J.~Zhang, L.~Fu, X.~Tian, Y.~Cui, and X.~Wang, ``Analysis of random walk mobility models with location heterogeneity,'' \emph{IEEE Trans. Parallel Distribut. Sys.}, vol.~26, no.~10, pp. 2657--2670, 2014.

\bibitem{epidemicModel}
M.~U. Kraemer, J.~L.-H. Tsui, S.~Y. Chang, S.~Lytras, M.~P. Khurana, S.~Vanderslott, S.~Bajaj, N.~Scheidwasser, J.~L. Curran-Sebastian, E.~Semenova \emph{et~al.}, ``Artificial intelligence for modelling infectious disease epidemics,'' \emph{Nature}, vol. 638, no. 8051, pp. 623--635, 2025.

\bibitem{EpidemicInternet}
E.~Tuyishimire and B.~A. Bagula, ``Modelling and analysis of interference diffusion in the internet of things: An epidemic model,'' in \emph{Proc. Conf. Inf. Commun. Technol. Society}, Durban, South Africa, 2020.

\bibitem{EpidemicEmail}
C.~C. Zou, D.~Towsley, and W.~Gong, ``Modeling and simulation study of the propagation and defense of internet e-mail worms,'' \emph{IEEE Trans. Depend. Secure Comput.}, vol.~4, no.~2, pp. 105--118, 2007.

\bibitem{SI}
N.~Mohan and N.~Kumari, ``Positive steady states of a {SI} epidemic model with cross diffusion,'' \emph{Applied Mathematics Comput.}, vol. 410, p. 126423, 2021.

\bibitem{SIS2}
H.~Li and R.~Peng, ``An {SIS} epidemic model with mass action infection mechanism in a patchy environment,'' \emph{Studies in Applied Mathematics}, vol. 150, no.~3, pp. 650--704, 2023.

\bibitem{SIR}
B.~Buonomo and A.~Giacobbe, ``Oscillations in {SIR} behavioural epidemic models: The interplay between behaviour and overexposure to infection,'' \emph{Chaos, Solitons \& Fractals}, vol. 174, p. 113782, 2023.

\bibitem{SEIR}
B.~Marinca, V.~Marinca, and C.~Bogdan, ``Dynamics of {SEIR} epidemic model by optimal auxiliary functions method,'' \emph{Chaos, Solitons \& Fractals}, vol. 147, p. 110949, 2021.

\bibitem{sis}
L.~J. Allen, ``Some discrete-time {SI}, {SIR}, and {SIS} epidemic models,'' \emph{Mathematical biosciences}, vol. 124, no.~1, pp. 83--105, 1994.

\bibitem{mmm}
M.~Telek and G.~Horv{\'a}th, ``A minimal representation of markov arrival processes and a moments matching method,'' \emph{Performance Evaluation}, vol.~64, no. 9-12, pp. 1153--1168, 2007.

\bibitem{nakagami}
M.~D. Yacoub, J.~V. Bautista, and L.~G. de~Rezende~Guedes, ``On higher order statistics of the {N}akagami-$m$ distribution,'' \emph{IEEE Trans. Veh. Technol.}, vol.~48, no.~3, pp. 790--794, 1999.

\bibitem{Interference1}
T.~V. Nguyen, D.~N. Nguyen, M.~Di~Renzo, and R.~Zhang, ``Leveraging secondary reflections and mitigating interference in multi-{IRS/RIS} aided wireless networks,'' \emph{IEEE Trans. Wireless Commun.}, vol.~22, no.~1, pp. 502--517, 2023.

\bibitem{Usermobility}
V.~K. Chapala and S.~M. Zafaruddin, ``Intelligent connectivity through {RIS}-assisted wireless communication: Exact performance analysis with phase errors and mobility,'' \emph{IEEE Trans. Intell. Veh.}, vol.~8, no.~10, pp. 4445--4459, 2023.

\bibitem{phaseError}
Z.~Chu, J.~Zhong, P.~Xiao, D.~Mi, W.~Hao, R.~Tafazolli, and A.~P. Feresidis, ``{RIS} assisted wireless powered {IoT} networks with phase shift error and transceiver hardware impairment,'' \emph{IEEE Trans. Commun.}, vol.~70, no.~7, pp. 4910--4924, 2022.

\bibitem{laplace}
N.~Y. Ermolova and O.~Tirkkonen, ``Laplace transform of product of generalized {M}arcum {Q}, {B}essel {I}, and power functions with applications,'' \emph{IEEE Trans. Signal Process.}, vol.~62, no.~11, pp. 2938--2944, 2014.

\bibitem{subcritical}
G.~Mo and L.~Qiao, ``A molecular dynamics investigation of n-alkanes vaporizing into nitrogen: transition from subcritical to supercritical,'' \emph{Combustion and Flame}, vol. 176, pp. 60--71, 2017.

\bibitem{PhaseTrans}
P.~Van~Mieghem, ``Epidemic phase transition of the {SIS} type in networks,'' \emph{Europhysics Lett.}, vol.~97, no.~4, p. 48004, 2012.

\bibitem{RISswitch}
B.~Yang, X.~Cao, C.~Huang, C.~Yuen, M.~Di~Renzo, Y.~L. Guan, D.~Niyato, L.~Qian, and M.~Debbah, ``Federated spectrum learning for reconfigurable intelligent surfaces-aided wireless edge networks,'' \emph{IEEE Trans. Wireless Commun.}, vol.~21, no.~11, pp. 9610--9626, 2022.

\bibitem{10729719}
A.~L. Moustakas and G.~C. Alexandropoulos, ``{MIMO} mac empowered by reconfigurable intelligent surfaces: Capacity region and large system analysis,'' \emph{IEEE Trans. Wireless Commun.}, vol.~23, no.~12, pp. 19\,245--19\,258, 2024.

\bibitem{RIScoverage}
J.~Sang, Y.~Yuan, W.~Tang, Y.~Li, X.~Li, S.~Jin, Q.~Cheng, and T.~J. Cui, ``Coverage enhancement by deploying {RIS} in {5G} commercial mobile networks: Field trials,'' \emph{IEEE Wireless Commun.}, vol.~31, no.~1, pp. 172--180, 2022.

\bibitem{PGFL}
R.~K. Ganti and M.~Haenggi, ``Asymptotics and approximation of the {SIR} distribution in general cellular networks,'' \emph{IEEE Trans. Wireless Commun.}, vol.~15, no.~3, pp. 2130--2143, 2015.

\end{thebibliography}

\end{document}